\newtheorem{theorem}{Theorem}
\newtheorem{lemma}[theorem]{Lemma}
\newtheorem{definition}[theorem]{Definition}
\newtheorem{property}[theorem]{Property}
\newcommand{\sq}{\hbox{\rlap{$\sqcap$}$\sqcup$}}
\newcommand{\qed}{\hspace*{\fill}\sq}
\newenvironment{proof}{\noindent {\bf Proof.}\ }{\qed\par\vskip 4mm\par}
\begin{document}

\title{Well-Structured Futures and Cache Locality}

\authorinfo{Maurice Herlihy}
           {Computer Science Department \\
           Brown University}
           {mph@cs.brown.edu}
\authorinfo{Zhiyu Liu}
           {Computer Science Department \\
           Brown University}
           {zhiyu\_liu@brown.edu}

\maketitle

\begin{abstract}
In \emph{fork-join parallelism},
a sequential program is split into a directed acyclic graph of tasks
linked by directed dependency edges,
and the tasks are executed,
possibly in parallel,
in an order consistent with their dependencies.
A popular and effective way to extend fork-join parallelism
is to allow threads to create \emph{futures}.
A thread creates a future to hold the results of a computation,
which may or may not be executed in parallel.
That result is returned when some thread \emph{touches} that future,
blocking if necessary until the result is ready.

Recent research has shown that while futures can, of course,
enhance parallelism in a structured way,
they can have a deleterious effect on cache locality.
In the worst case,
futures can incur $\Omega(P T_\infty + t T_\infty)$ deviations, which implies $\Omega(C P T_\infty + C t T_\infty)$ additional cache misses,
where $C$ is the number of cache lines,
$P$ is the number of processors,
$t$ is the number of touches,
and $T_\infty$ is the \emph{computation span}.
Since cache locality has a large impact on software performance on modern
multicores, this result is troubling.

In this paper, however,
we show that if futures are used in a simple, disciplined way,
then the situation is much better:
if each future is touched only once,
either by the thread that created it,
or by a thread to which the future has been passed from the thread that created it,
then parallel executions with work stealing can incur at most
$O(C P T^2_\infty)$ additional cache misses,
a substantial improvement.
This structured use of futures is characteristic of many
(but not all) parallel applications.
\end{abstract}

\section{Introduction}
Futures \cite{Halstead84,Halstead85} are an attractive way to structure many
parallel programs because they are easy to reason about
(especially if the futures have no side-effects) and they lend themselves well
to sophisticated dynamic scheduling algorithms,
such as work-stealing \cite{Blumofe99} and its variations,
that ensure high processor utilization.
At the same time, however,
modern multicore architectures employ complex multi-level memory hierarchies,
and technology trends are increasing the relative performance differences among
the various levels of memory.
As a result,
processor utilization can no longer be the sole figure of merit for schedulers.
Instead, the \emph{cache locality} of the parallel execution will become
increasingly critical to overall performance.
As a result, cache locality will increasingly join processor utilization
as a criterion for evaluating dynamic scheduling algorithms.

Several researchers \cite{Acar00,Spoonhower09} have shown, however,
that introducing parallelism through the use of futures can
sometimes substantially reduce cache locality.
In the worst case,
if we add futures to a sequential program,
a parallel execution managed by a work-stealing scheduler can incur $\Omega(P T_\infty + t T_\infty)$ deviations,
which, as we show, implies
$\Omega(C P T_\infty + C t T_\infty)$ more cache misses than the sequential
execution.
Here, $C$ is the number of cache lines,
$P$ is the number of processors,
$t$ is the number of touches,
and $T_\infty$ is the computation's \emph{span} (or \emph{critical path}).
As technology trends cause the cost of cache misses to increase,
this additional cost is troubling.

This paper makes the following three contributions.
First, we show that if futures are used in a simple, disciplined way,
then the situation with respect to cache locality is much better:
if each future is touched only once,
either by the thread that created it,
or by a thread to which the future has been passed directly or indirectly
from the thread that created it,
then parallel executions with work stealing can incur at most
$O(C P T^2_\infty)$ additional cache misses,
a substantial improvement over the unstructured case.
This result provides a simple way to identify computations for which
introducing futures will not incur a high cost in cache locality,
as well as providing guidelines for the design of future parallel computations.
(Informally, we think these guidelines are natural,
and correspond to structures programmers are likely to use anyway.)
We also prove that this upper bound is tight within a factor of $C$.
Our second contribution is to observe that when the scheduler has a choice
between running the thread that created a future,
and the thread that implements the future,
running the future thread first provides better cache locality.
Finally, we show that certain variations of structured computation also
have good cache locality.

The paper is organized as follows.
Section~\ref{section:model} describes the model for
future-parallel computations.
In Section~\ref{section:Work-Stealing and Cache Locality},
we describe parsimonious work-stealing schedulers,
and briefly discuss their cache performance measures.
In Section~\ref{section_structured},
we define some restricted forms of structured future-parallel computations.
Among them, we highlight structured single-touch computations,
which, we believe, are likely to arise naturally in many programs.
In Section~\ref{section_futurethreadfirst},
we prove that work-stealing schedulers on structured single-touch computations incur only
$O(C P T^2_\infty)$ additional cache misses, if a processor always chooses the
future to execute first when it creates that future.
We also prove this bound is tight within a factor of $C$.
In section~\ref{section:Parent Thread First at Each Fork},
we show that if a processor chooses the current thread over the
future thread when it creates that future,
then the cache locality of a structured single-touch computation can be much worse.
In Section~\ref{section:Other Kinds of Structured Computations},
we show that some other kinds of structured future-parallel computations also
achieve relatively good cache locality.
Finally, we present conclusions in Section~\ref{section:Conclusions}.

\section{Model}
\label{section:model}
In \emph{fork-join parallelism}~\cite{Blelloch95,Blelloch96,R.Blumofe96},
a sequential program is split into a directed acyclic graph of \emph{tasks}
linked by directed dependency edges.
These tasks are executed in an order consistent with their dependencies,
and tasks unrelated by dependencies can be executed in parallel.
Fork-join parallelism is well-suited to dynamic load-balancing techniques such
as \emph{work stealing}~\cite{Burton81,Halstead84,Arora98,Blumofe99,Acar00,Halstead85,Blumofe95,Frigo98,Kranz89,Agrawal07,Chase05}.

A popular and effective way to extend fork-join parallelism
is to allow threads to create \emph{futures}~\cite{Halstead84,Halstead85,Arvind89,Blelloch97,Fluet08}.
A future is a data object that represents a \emph{promise} to deliver the
result of an asynchronous computation when it is ready.
That result becomes available to a thread when the thread \emph{touches} that
future, blocking if necessary until the result is ready.
Futures are attractive because they provide greater flexibility than fork-join
programs,
and they can also be implemented effectively using dynamic load-balancing
techniques such as work stealing.
Fork-join parallelism can be viewed as a special case of
future-parallelism,
where the \texttt{spawn} operation is an implicit future creation,
and the \texttt{sync} operation is an implicit touch of the untouched futures
created by a thread.
Future-parallelism is more flexible than fork-join parallelism,
because the programmer has finer-grained control over touches (joins).

\subsection{Computation DAG}
A thread creates a future by marking an expression (usually a method call) as a
\emph{future}.
This statement spawns a new thread to evaluate that expression in parallel with
the thread that created the future.
When a thread needs access to the results of the computation,
it applies a \emph{touch} operation to the future.
If the result is ready, it is returned by the touch,
and otherwise the touching thread blocks until the result becomes ready.
Without loss of generality,
we will consider fork-join parallelism to be a special case of
future-parallelism, where forking a thread creates a future,
and joining one thread to another is a touch operation.

Our notation and terminology follow earlier
work~\cite{Arora98,Blumofe99,Acar00,Spoonhower09}.
A future-parallel computation is modeled as a \emph{directed acyclic graph} (DAG).
Each node in the DAG represents a task (one or more instructions),
and an edge from node $u$ to node $v$ represents the dependency constraint
that $u$ must be executed before $v$.
We follow the convention that each node in the DAG has in-degree and out-degree
either 1 or 2,
except for a distinguished \emph{root node} with in-degree 0,
where the computation starts,
and a distinguished \emph{final node} with out-degree 0,
where the computation ends.

There are three types of edges:
\begin{itemize}
\item
\emph{continuation edges},
which point from one node to the next in the same thread,

\item
\emph{future edges} (sometimes called \emph{spawn} edges),
which point from node $u$ to the first node of another thread spawned at
$u$ by a future creation,

\item
\emph{touch edges} (sometimes called \emph{join} edges),
directed from a node $u$ in one thread $t$ to a node $v$ in another thread,
indicating that $v$ touches the future computed by $t$.
\end{itemize}
A \emph{thread} is a maximal chain of nodes connected by continuation edges.
There is a distinguished \emph{main thread} that begins at the root node and
ends at the final node,
and every other thread $t$ begins at a node with an
incoming future edge from a node of the thread that spawns $t$.
The last node of $t$ has only one outgoing edge which is a touch edge directed to another thread,
while other nodes of $t$ may or may not have incoming and outgoing touch edges.
A \emph{critical path} of a DAG is a longest directed path in the DAG,
and the DAG's \emph{computation span} is the length of a critical path.

\begin{figure}
\centering
\subfigure[]{\includegraphics[width=5.6cm]{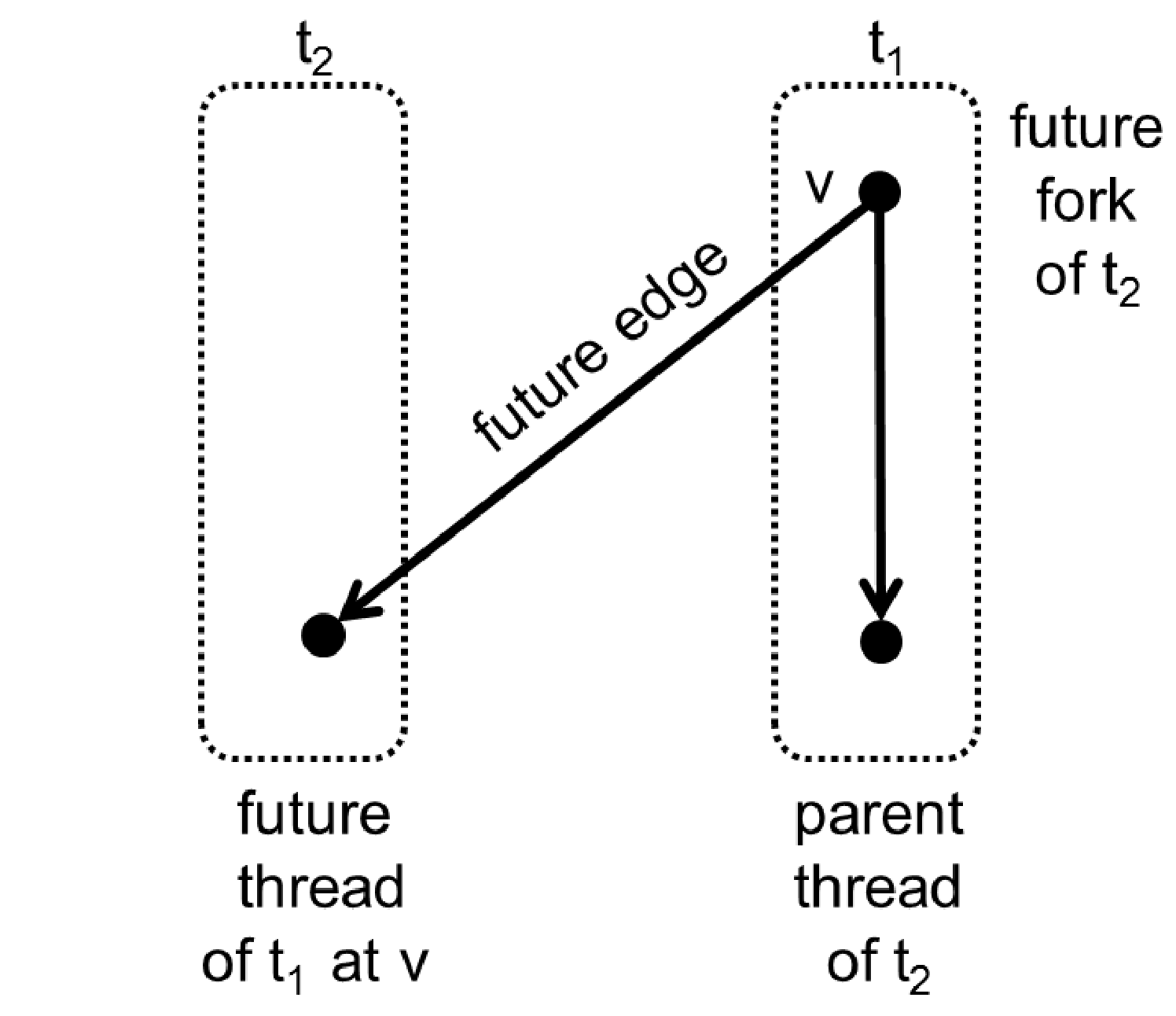}}
\subfigure[]{\includegraphics[width=5.6cm]{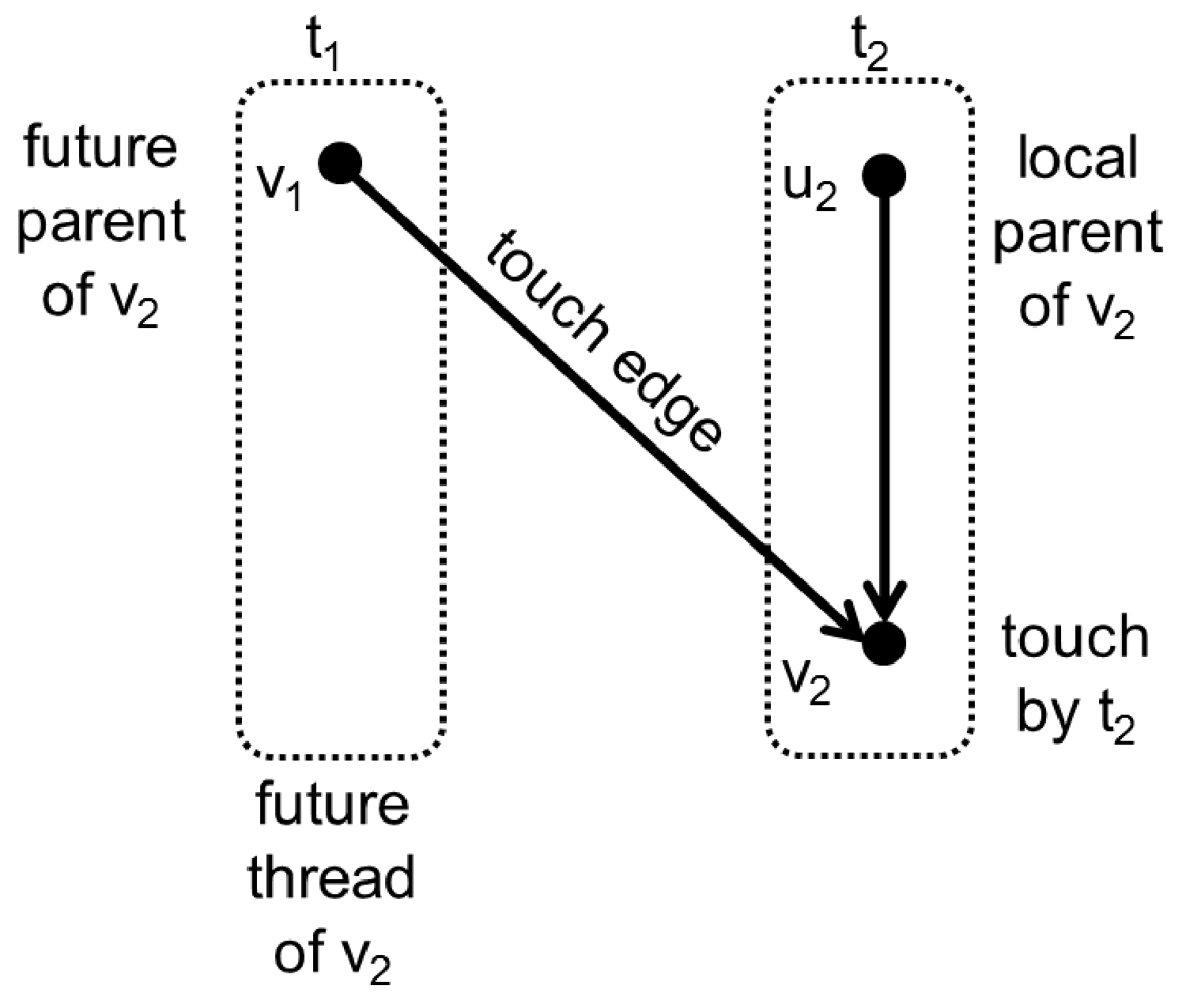}}
\caption{Node and thread terminology}
\label{figure:terminology}
\end{figure}
As illustrated in Figure~\ref{figure:terminology},
if a thread $t_1$ spawns a new thread $t_2$ at node $v$ in $t_1$ (i.e., $v$ has two out-going edges, a continuation edge and a future edge to the first node of $t_2$),
then we call $t_1$ the \emph{parent thread} of $t_2$,
$t_2$ the \emph{future thread} (of $t_1$) at $v$,
and $v$ the \emph{fork} of $t_2$.
A thread $t_3$ is a \emph{descendant thread} of $t_1$ if $t_3$ is a future
thread of $t_1$ or, by induction,
$t_3$'s parent thread is a descendant thread of $t_1$.

If there is a touch edge directed from node $v_1$ in thread $t_1$
to node $v_2$ in thread $t_2$ (i.e., $t_2$ touches a future computed by $t_1$),
and a continuation edge directed from node $u_2$ in $t_2$ to $v_2$,
then we call node $v_2$ a \emph{touch of} $t_1$ by $t_2$,
$v_1$ the \emph{future parent} of $v_2$,
$u_2$ the \emph{local parent} of $v_2$,
and $t_1$ the future thread of $v_2$. (Note that the touch $v_2$ is actually a node in thread $t_2$.)
We call the fork of $t_1$ the \emph{corresponding fork} of $v_2$.

Note that only touch nodes have in-degree 2.
To distinguish between the two types of nodes with out-degree 2,
forks and future parents of touches,
we follow the convention of previous work that the children of a fork
both have in-degree 1 and cannot be touches.
In this way,
a fork node has two children with in-degree 1,
while a touch's future parent has a (touch) child with in-degree 2.

We follow the convention that when a fork appears in a DAG,
the future thread is shown on the left, and the future parent on the right.
(Note that this does not mean the future thread is chosen to execute first at a fork.)
Similarly, the future parent of a touch is shown on the left,
and the local parent on the right.

We use the following (standard) notation.
Given a computation DAG,
$P$ is the number of processors executing the computation,
$t$ is the number of touches in the DAG,
$T_{\infty}$, the \emph{computation span} (or \emph{critical path}),
is the length of the longest directed path,
and $C$ is the number of cache lines in each processor.

\section{Work-Stealing and Cache Locality}
\label{section:Work-Stealing and Cache Locality}
In the paper, we focus on parsimonious work stealing algorithms \cite{Arora98},
which have been extensively studied~\cite{Arora98,Blumofe99,Acar00,Spoonhower09,Blumofe98}
and used in systems such as Cilk~\cite{Blumofe95}.
In a parsimonious work stealing algorithm,
each processor is assigned a double-ended queue (deque).
After a processor executes a node with out-degree 1,
it continues to execute the next node if the next node is ready to execute.
After the processor executes a fork,
it pushes one child of the fork onto the bottom of its deque and executes the other.
When the processor runs out of nodes to execute,
it pops the first node from the bottom of its deque if the deque is not empty.
If, however, its deque is empty,
it steals a node from the top of the deque of an arbitrary processor.

In our model, a cache is fully associative and consists of multiple \emph{cache lines},
each of which holds the data in a \emph{memory block}.
Each instruction can access only one memory block.
In our analysis we focus only on the widely-used \emph{least-recently used}
(LRU) cache replacement policy, but our results about the upper bounds on cache
overheads should apply to all \emph{simple} cache replacement policies~\cite{Acar00}.
\footnote{That is because the upper bounds in this paper are based on the results
of \cite{Acar00} that bound the number of drifted nodes (i.e., deviations), and those
results hold for all simple cache replacement policies,
even with set associative caches, as discussed in \cite{Acar00}.}

The \emph{cache locality} of an execution is measured by the number of cache
misses it incurs,
which depends on the structure of the computation.
To measure the effect on cache locality of parallelism,
it is common to compare cache misses encountered in a sequential
execution to the cache misses encountered in various parallel executions,
focusing on the number of \emph{additional} cache misses
introduced by parallelism.

Scheduling choices at forks affect the cache locality of executions with work stealing.
After executing a fork,
a processor picks one of the two child nodes to execute and pushes the other into its deque.
For a sequential execution,
whether a choice results in a better cache performance is a characteristic of
the computation itself.
For a parallel execution of a computation satisfying certain properties,
however,
we will show that choosing future threads (the left children) at forks
to execute first guarantees a relatively good upper bound on the number of
additional cache misses,
compared to a sequential execution that also chooses future threads first.
In contrast, choosing the parent threads (the right children) to execute first
can result in a large number of additional cache misses,
compared to a sequential execution that also chooses parent threads first.

\section{Structured Computations}
\label{section_structured}
Consider a sequential execution where node $v_1$ is executed immediately before
node $v_2$.
A \emph{deviation}~\cite{Spoonhower09}, also called a drifted node \cite{Acar00}, occurs in a parallel execution if
a processor $P$ executes $v_2$, but not immediately after $v_1$.
For example,
$p$ might execute $v_1$ after $v_2$,
it might execute other nodes between $v_1$ and $v_2$,
or $v_1$ and $v_2$ might be executed by distinct processors.

\cite{Spoonhower09} showed that a parallel execution of a
future-parallel computation with work stealing can incur $\Omega(PT_{\infty}+ tT_{\infty})$ deviations. This implies a parallel execution of a
future-parallel computation with work stealing can incur
$\Omega(PT_{\infty}+ tT_{\infty})$ additional cache misses.
With minor modifications in that computation (see Figure~\ref{figure:worstcaseDAGinSpoonhower09}),
a parallel execution can even incur $\Omega(CPT_{\infty} + CtT_{\infty})$
additional cache misses.

\begin{figure}
\centering
\includegraphics[width=5cm]{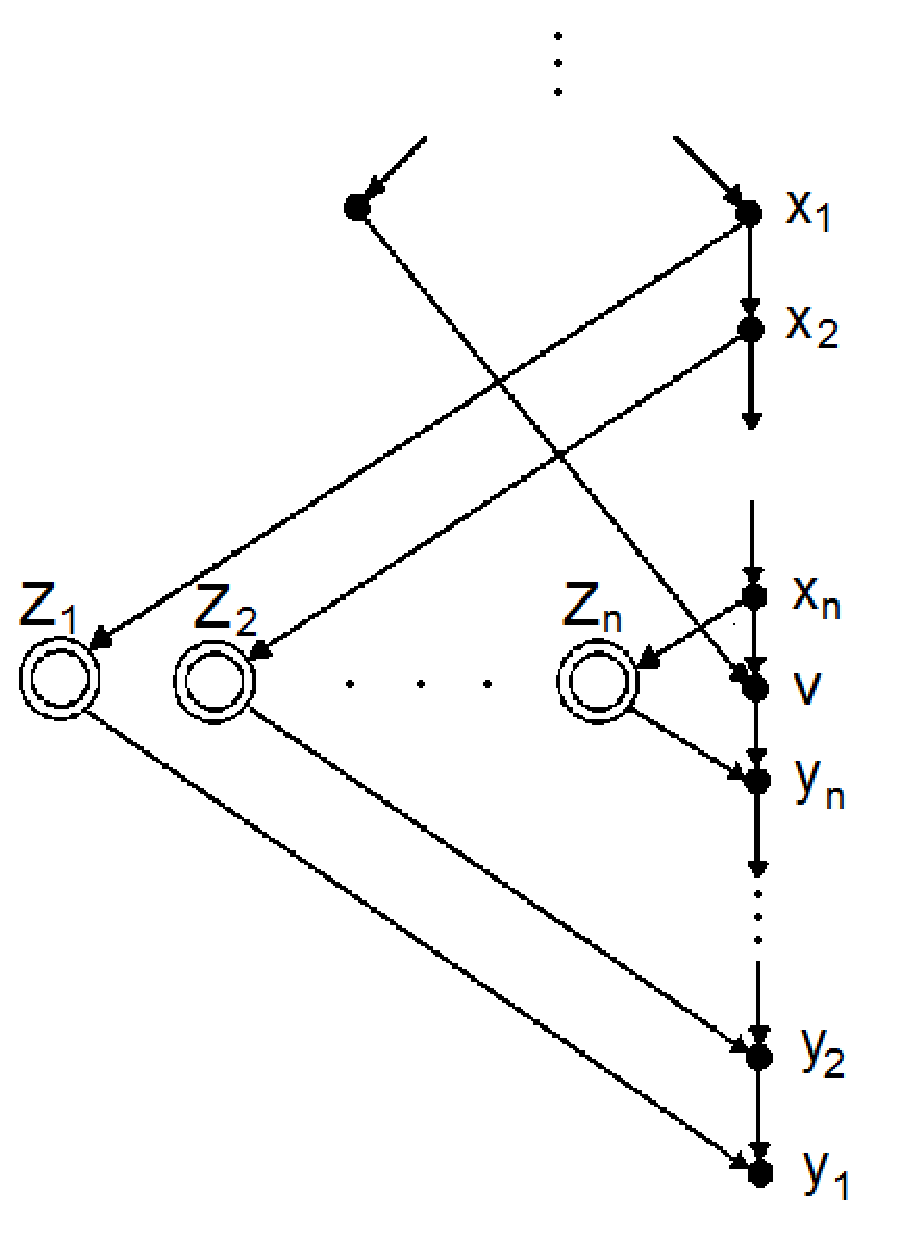}
\caption{The interesting part of the bound is $\Omega(CtT_{\infty})$. Figure 5 in \cite{Spoonhower09} shows a DAG, as a building block of a worst-case computation, that can incur $\Omega(T_{\infty})$ deviations because of one touch. We can replace it with the DAG in Figure~\ref{figure:worstcaseDAGinSpoonhower09}, which can incur $\Omega(CT_{\infty})$ additional cache misses due to one touch $v$ (if the processor at a fork always chooses the parent thread to execute first), so that the worst-case computation in \cite{Spoonhower09} can incur $\Omega(CtT_{\infty})$ additional cache misses because of $t$ such touches. This DAG is similar to the DAG in Figure~\ref{fig_parentfirst_lowerbound1}(a) in this paper. The proof of Theorem~\ref{parentthreadfirst} shows how a parallel execution of this DAG incurs $\Omega(CT_{\infty})$ additional cache misses.}
\label{figure:worstcaseDAGinSpoonhower09}
\end{figure}

Our contribution in this paper is based on the observation that such poor cache
locality occurs primarily when futures in the DAG can be touched by arbitrary
threads, resulting in unrealistic and complicated dependencies.
For example, in the worst-case DAGs in \cite{Spoonhower09} that can incur
significantly high cache overheads, futures are touched by threads that can be
created before the future threads computing these futures were created.
As illustrated in Figure~\ref{fig_unstructured},
a parallel execution of such a computation can arrive at a scenario where a
thread touches a future before the future thread computing that future has
been spawned.
(As a practical matter,
an implementation must ensure that such a touch does not return a reference to a
memory location that has not yet been allocated.)
Such scenarios are avoided by \emph{structured}
future-parallel computations (e.g. Figure~\ref{fig_structured})
that follow certain simple restrictions.

\begin{figure}
\centering
\includegraphics[width=7.5cm]{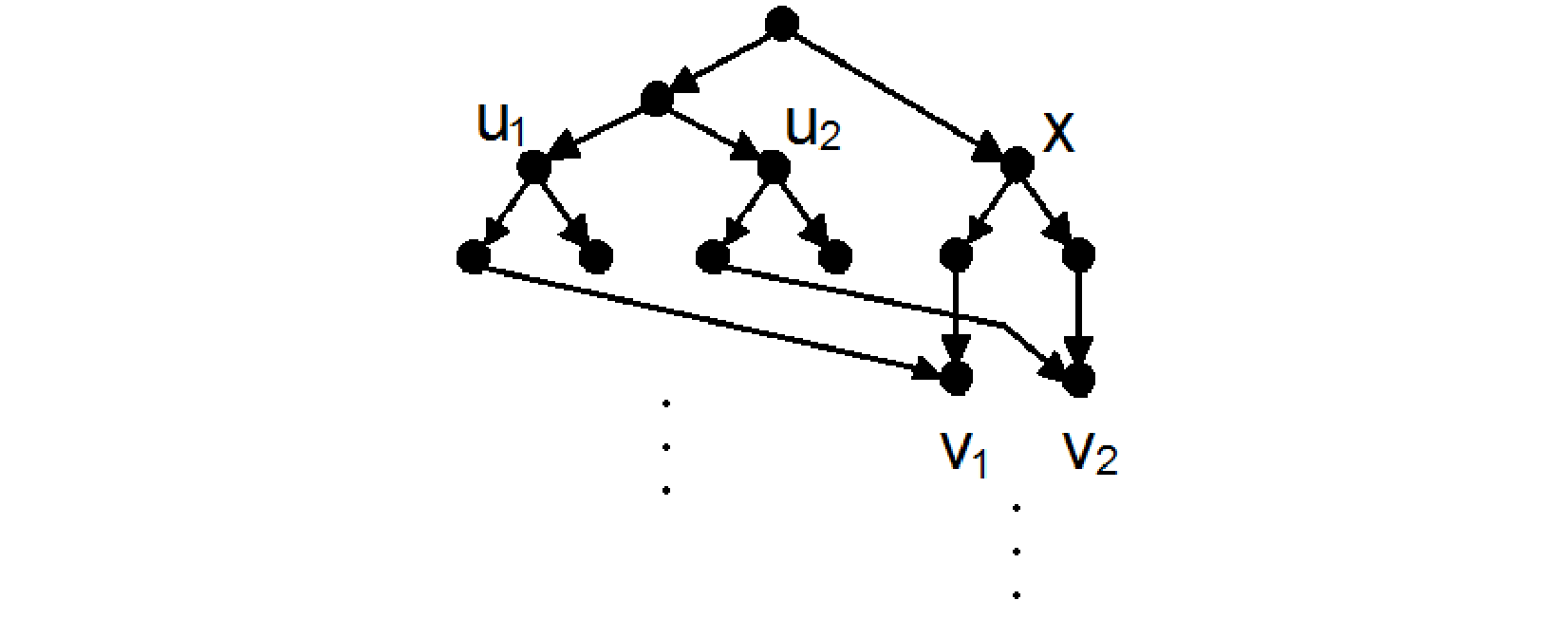}
\caption{A simplified version of the DAG in \cite{Spoonhower09}
that can incur high cache overhead.
Here, $v_1$ and $v_2$ are touches.
Suppose a processor $p_1$ executes the root node,
pushes the right child $x$ of the root node into its deque,
and then falls asleep.
Now another processor $p_2$ steals $x$ from $p_1$'s deque and executes the subgraph rooted at $x$.
Thus, $v_1$ and $v_2$ will be checked (to see if they are available) even before the corresponding future
threads are spawned at $u_1$ and $u_2$.}
\label{fig_unstructured}
\end{figure}

\begin{figure}
\centering
\includegraphics[width=7.5cm]{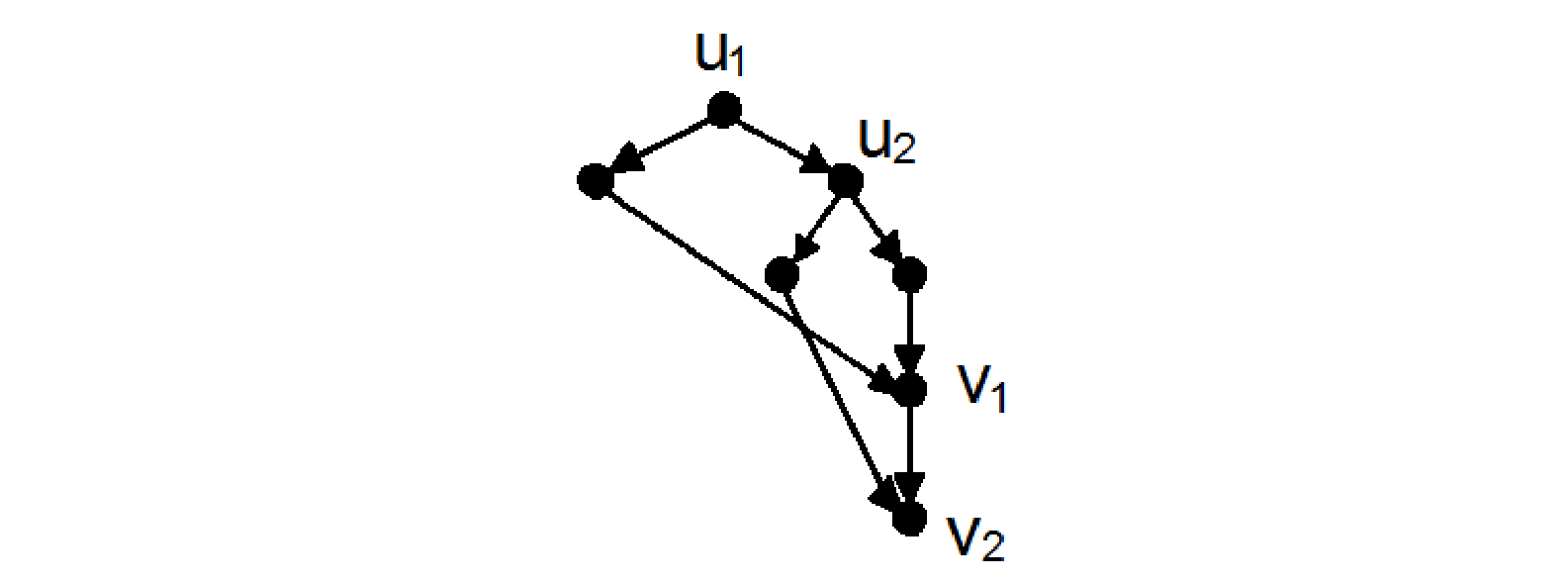}
\caption{In this structured (single-touch) computation, the touches $v_1$ and $v_2$
will not be checked until their corresponding future threads have been spawned
at $u_1$ and $u_2$, respectively.}
\label{fig_structured}
\end{figure}

\begin{definition}
A DAG is a \emph{structured future-parallel computation} if,
(1) for the future thread $t$ of any fork $v$,
the local parents of the touches of $t$ are descendants of $v$,
and (2) at least one touch of $t$ is a descendant of the right child of $v$.
\end{definition}
There are two reasons we require that at least one touch of $t$ is a descendant of the
right child of $v$.
First, it is natural that a computation spawns a future thread to compute a
future because the computation itself later needs that value.
At the fork $v$, the parent thread (the right child of $v$)
represents the ``main body" of the computation.
Hence, the future will usually be touched either by the parent thread,
or by threads spawned directly or indirectly by the parent thread.

Second, a computation usually needs a kind of ``barrier'' synchronization to
deal with resource release at the end of the computation.
Some node in the future thread $t$, usually the last node,
should have an outgoing edge pointing to the ``main body" of the
computation to tell the main body that the future thread has finished.
Without such synchronization,
$t$ and its descendants will be isolated from the main body
of the computation,
and we can imagine a dangerous scenario where the main body
of the computation finishes and releases its resources while
$t$ or its descendant threads are still running.

In our DAG model,
such a synchronization point is by definition a touch node,
though it may not be a real touch.
We follow the convention that
the thread that spawns a future thread releases it,
so the synchronization point is a node in the parent thread or one of its
descendants.
Another possibility is to place the synchronization point at the last node of
the entire computation,
which is the typically case in languages such as Java,
where the main thread of a program is in charge of releasing resources for the
entire  computation.
These two styles are essentially equivalent,
and should have almost the same bounds on cache overheads.
We will briefly discuss this issue in Section~\ref{section_superfinalnode}.

We consider how the following constraint affects cache locality.
\begin{definition}
A structured \emph{single-touch} computation is a structured future-parallel
computation where each future thread spawned at a fork $v$ is touched only once,
and the touch node is a descendant of $v$'s right child.
\end{definition}
By the definition of threads, the future parent of the only touch of a future thread must be the last node of the future thread (the last node can also be a parent of a join node, but we don't distinguish between a touch node and a join node).
The DAG in Figure \ref{fig_structured} represents a structured single-touch computation.
We will show that work-stealing parallel executions of structured single-touch
computations achieve significantly less cache overheads than unstructured computations.

In principle, a future could be touched multiple times by different threads,
so structured single-touch computations are more restrictive structured computations in general.
Nevertheless, the single-touch constraint is one that is likely to be observed by many programs.
For example, as noted,
the Cilk~\cite{Blumofe95} language supports fork-join parallelism,
a strict subset of the future-parallelism model considered here.
If we interpret the Cilk language's \texttt{spawn}
statement as creating a future,
and its \texttt{sync} statement as touching all untouched futures previously
created by that thread,
then Cilk programs (like all fork-join programs)
are structured single-touch computations.

Structured single-touch computations encompass fork-join computations,
but are strictly more flexible. Figure~\ref{figure:ExamplesOfSingleTouch} presents two examples that illustrate the differences. If a thread creates multiple futures first and touches them later, fork-join parallelism requires they be touched (evaluated) in the reverse order. MethodA in Figure~\ref{figure:ExamplesOfSingleTouch}(a) shows the only order in which a thread can first create two futures and then touch them in a fork-join computation. This rules out, for instance, a program where a thread creates a sequence of futures, stores them in a priority queue, and evaluates them in some priority order. In contrast, our structured computations permit such futures to be evaluated by their creating thread or its descendants in any order.

Also, unlike fork-join parallelism, our notion of structured computation permits a thread to pass a future to another thread which touches that future, as illustrated in Figure~\ref{figure:ExamplesOfSingleTouch}(b): after a future is created, the future can be passed, as an argument of a new method call or the return value of the current thread's method call, to another thread.\footnote{
In previous versions of this paper, we misinterpreted the definition of structured single-touch computations
by stating that a future can only be passed as an argument of a method call,
but not as the return value of a method call.
In fact, the definition only implies a dependency path from the right child of the fork of a future,
which represents the parent thread that has created the future, to the touch node,
and therefore it does not rule out the possibility that a future can be pass as a return value
from one thread to another. }
The thread receiving the future (MethodC in the figure) can even pass it to another thread, and so on.
The only constraint is that only one of the threads that have received the future can touch it.
In a fork-join computation, however, only the thread creating the future can touch it, which is much more restrictive. We believe these restrictions are easy to follow and should be compatible with how many people program in practice.

\begin{figure}
\centering
\begin{tabbing}
\hspace{0.7in} void MethodA \{ \\
\hspace{0.9in}\= Future x = some computation;\\
\> Future y = some computation;\\
\> a = y.touch();\\
\> b = x.touch();\\
\hspace{0.7in} \}
\end{tabbing}
\subfigure[]{}

\begin{tabbing}
\\
\hspace{0.7in} void MethodB \{ \\
\hspace{0.9in}\= Future x = some computation;\\
\> Future y = MethodC(x);\\
\> ...... \\
\hspace{0.7in} \} \\

\hspace{0.7in} void MethodC(Future f)\{ \\
\hspace{0.9in} \= a = f.touch();\\
\> ...... \\
\hspace{0.7in} \}
\end{tabbing}
\subfigure[]{}
\caption{Two examples illustrating single-touch computations are more flexible than fork-join computations}
\label{figure:ExamplesOfSingleTouch}
\end{figure}


\cite{Blelloch97} observe that if a future can be touched multiple times,
then complex and potentially inefficient operations and data structures are
needed to correctly resume the suspended threads that are waiting for the touch.
By contrast, the run-time support for futures can be significantly simplified
if each future is touched at most once.

We also consider the following structured local-touch computations in the paper.
\begin{definition}
A structured \emph{local-touch} computation is one where each future thread
spawned at a fork $v$ is touched only at nodes in its parent thread,
and these touches are descendants of the right child of $v$.
\end{definition}
Informally, the local touch constraint implies that a thread that needs the
value of a future should create the future itself.
Note that in a structured computation with local touch constraint,
a future thread is now allowed to evaluate multiple futures and these futures
can be touched at different times.
Though allowing a future thread to compute multiple futures is not very common,
\cite{Blelloch97} point out that it can be useful for some
future-parallel computations like pipeline
parallelism~\cite{Blelloch97,Blumofe95,Gordon06,Giacomoni08,Lee13}. We will show in
Section~\ref{section_localtouch} that work-stealing parallel executions of
computations satisfying the local touch constraint also have relatively low cache
overheads.
Note that structured computations with both single touch and local
touch constraints are still a superset of fork-join computations.

\section{Structured Single-Touch Computations}
\label{section:Structured Single-Touch Computations}
\subsection{Future Thread First at Each Fork}
\label{section_futurethreadfirst}
We now analyze cache performance of work stealing on parallel executions of
structured single-touch computations.
We will show that work stealing has relatively low cache overhead if the
processor at a fork always chooses the future thread to execute first,
and puts the parent future into its deque.
For brevity,
all the arguments and results in this section assume that
every execution chooses the future thread at a fork to execute first.

\begin{lemma}\label{lemmafutureparentfirst}
In the sequential execution of a structured single-touch computation, any touch $x$'s future parent is executed before $x$'s local parent, and the right child of $x$'s corresponding fork $v$ immediately follows $x$'s future parent.
\end{lemma}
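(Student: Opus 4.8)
The plan is to show that the future-first sequential execution performs a depth-first traversal of the spawn tree of threads, with the deque acting as a stack: in a one-processor execution there is no stealing, so nodes are only pushed onto and popped from the bottom of the deque. Write $t$ for $x$'s future thread, $v$ for its corresponding fork, $r$ for the right child of $v$, and $S(v)$ for the \emph{future subtree} of $v$ (the nodes of $t$ together with those of all descendant threads of $t$). The heart of the argument is the following well-nestedness claim $(\star)$: once the execution reaches the first node of $t$ with $r$ freshly pushed at the bottom of the deque, it executes exactly the nodes of $S(v)$ as a single contiguous block, never popping $r$; this block ends with the last node of $t$ --- which is precisely $x$'s future parent $p$ --- and $r$ is popped and executed immediately afterwards. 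I would prove $(\star)$ by induction on $|S(v)|$ for the recursive descent, using a strengthened hypothesis (described below) to handle dependencies reaching outside $S(v)$.

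For the inductive step I would trace the walk along $t$. Each time the walk meets a fork $v'$ inside $t$, it pushes the right child of $v'$ and descends into the future thread $t'$; since $S(v')\subsetneq S(v)$, the hypothesis $(\star)$ for $v'$ says that $S(v')$ is processed as a contiguous block ending at the last node of $t'$, after which the right child of $v'$ is popped and the walk along $t$ resumes. Thus every right child pushed inside $S(v)$ is popped again inside $S(v)$, so $r$, having been pushed before all of them, is popped only after all of them, hence not while the walk is inside $S(v)$. The walk therefore continues until it reaches the last node $p$ of $t$. The only successor of $p$ is the touch $x$, and $x$ is not ready: by the single-touch definition its local parent $\ell$ is a descendant of $r$, and $r$ has not been executed (it is still on the deque), so $\ell$ has not been executed either. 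Hence the processor pops, and since every push made inside $S(v)$ has already been matched by a pop, the node now at the bottom of the deque is exactly $r$. This gives the second half of the lemma: $r$ immediately follows $p$. The first half is then immediate, since $\ell$ is a descendant of $r$ and so is executed after $r$, hence strictly after $p$; that is, $x$'s future parent precedes $x$'s local parent.

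The step I expect to be the main obstacle is verifying that the walk through $S(v)$ does not leave the subtree prematurely --- that every touch node met while walking inside $S(v)$ is genuinely ready when it is reached through its local parent. Such a touch touches some future thread $t''$, and readiness requires the last node of $t''$ (its future parent) to have been executed already. I would settle this by a case analysis on the position of $t''$'s fork $v''$ in the spawn tree. The touch lies in $S(v)$ and, by the single-touch definition, is a descendant of the right child of $v''$, so both $v$ and $v''$ are its ancestors and are therefore comparable in the tree. If $v''$ lies inside $S(v)$, then $t''$ is a descendant thread of $t$ and the hypothesis $(\star)$ for $v''$ has already executed the last node of $t''$ before the touch is reached; the degenerate possibility $v''=v$ is excluded because the unique touch of $t$ is $x$, which lies outside $S(v)$. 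Otherwise the right child of $v''$ is an ancestor of $v$, and future-first guarantees that $S(v'')$ --- in particular the last node of $t''$ --- is traversed before the execution ever descends toward $v$, hence before it enters $S(v)$. Either way the future parent is already done, so the touch is ready and the walk stays inside $S(v)$ until it terminates at $p$.

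Handling the second case cleanly forces the induction to be organized so that $(\star)$ for a fork is available whenever the traversal has already exited that fork's subtree. I would therefore phrase the induction over the order in which the traversal first enters the future subtrees, strengthening the hypothesis to record that, upon entering $S(v)$, every future parent of a touch in $S(v)$ whose future thread is spawned outside $S(v)$ has already been executed, and then check that this readiness invariant is preserved by each recursive descent. Once the invariant is in place, $(\star)$ and hence both halves of the lemma follow.
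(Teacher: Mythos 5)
Your proposal is correct and follows essentially the same route as the paper's proof: both argue by induction that the future-first sequential execution uses the deque as a stack and traverses each future subtree as a contiguous block, so that when $x$'s future parent (the last node of $t$) is executed, $v$'s right child sits at the bottom of the deque and is popped next, while $x$'s local parent, being a descendant of that right child, is still unexecuted. The only difference is bookkeeping --- the paper organizes the induction over touches in touch-ancestor order (maintaining Properties~1 and~2 for already-processed touches) rather than over forks in traversal order with a readiness invariant, and your explicit case analysis for touches inside $S(v)$ whose future threads are spawned outside $S(v)$ is if anything slightly more careful than the paper's, which only discusses touches by $t$ itself.
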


\begin{proof}
By induction. Given a DAG,
initially let $S$ be an empty set and $T$ the set of all touches.
Note that
\begin{equation}
S \cap T = \emptyset \text{ and } S \cup T =\{\text{all touches}\}.
\label{eq:ST}
\end{equation}
Consider any touch $x$ in $T$,
such that $x$ has no ancestors in $T$.
(That is, $x$ has no ancestor nodes that are also touches.)
Let $t$ be the future thread of $x$ and $v$ the corresponding fork. Note that $x$'s future parent is the last node of $t$ by definition.
When the single processor executes $v$,
the processor pushes $v$'s right child into the deque and continues to execute thread $t$.
By hypothesis, there are no touches by $t$,
since any touch by $t$ must be an ancestor of $x$.
There may be some forks in $t$.
However, whenever the single processor executes a fork in $t$,
it pushes the right child of that fork,
which is a node in $t$, into the deque and
hence $t$ (i.e., a node in $t$) is right below $v$'s right child in the deque.
Therefore, the processor will always resume thread $t$ before the right child of $v$.
Since there is no touch by $t$,
all the nodes in $t$ are ready to execute one by one.
Thus, when the future parent of the touch $x$ is executed eventually,
the right child of $v$ is right at the bottom of the deque.
By the single touch constraint,
the local parent of $x$ is a  descendant of the right child of $v$,
so the local parent of $x$ cannot be executed yet.
Thus, the processor will now pop the right child of $v$ out
from the bottom of the deque.
Since this node is not a touch, it is ready to execute.
Therefore, $x$ satisfies the following two properties.
\begin{property}
Its future parent is executed before its local parent.
\label{prop:1}
\end{property}
\begin{property}
The right child of its corresponding fork immediately
follows its future parent.
\label{prop:2}
\end{property}
Now set $S = S \cup \{x\}$ and $T = T - \{x\}$.
Thus, all touches in $S$ satisfy Properties~\ref{prop:1} and~\ref{prop:2}.
Note that Equation~\ref{eq:ST} still holds.

Now suppose that at some point all nodes in $S$ satisfy Properties~\ref{prop:1}
and~\ref{prop:2}, and that Equation~\ref{eq:ST} holds.
Again, we now consider a touch $x$ in $T$, such that no touches in $T$ are
ancestors of $x$, i.e., all the touches that are ancestors of $x$ are in
$S$.
Since the computation graph is a DAG,
there must be such an $x$ as long as $T$ is not empty.
Let $t$ be the future thread of $x$ and $v$ the corresponding fork.
If there are no touches by $t$,
then we can prove $x$ satisfies Properties~\ref{prop:1} and~\ref{prop:2},
by the same argument for the first touch added into $S$.
Now assume there are touches by $t$.
Since those touches are ancestors of $x$,
they are all in $S$ and hence they all satisfy Property~\ref{prop:1}.
When the processor executes $v$,
it pushes $v$'s right child into the deque and starts executing $t$.
Similar to what we showed above,
when the processor gets to a fork in $t$,
it will always push $t$ into its deque,
right below the right child of $v$.
Thus, the processor will always resume $t$ before the right child of $v$.
When the processor gets to the local parent of a touch by $t$,
we know the future parent of the touch has already been executed since the
touch satisfies  Property~\ref{prop:1}.
Thus, the processor can immediately execute that touch and continue to execute $t$.
Therefore, the processor will eventually execute the future parent of $x$ while the right
child of $t$ is still the next node to pop in the deque.
Again, since the local parent of $x$ is a descendant of the right child of $v$,
the local parent of $x$ as well as $x$ cannot be executed yet.
Therefore, the processor will now pop the right child of $v$ to execute,
and hence $x$ satisfies Properties~\ref{prop:1} and~\ref{prop:2}.
Now we set $S = S \cup \{x\}$ and $T = T - \{x\}$.
Therefore, all touches in $S$ satisfy Properties~\ref{prop:1} and~\ref{prop:2},
and Equation\ref{eq:ST}) also holds.
By induction,
we have $S= \{\text{all touches}\}$ and all touches satisfy
Properties~\ref{prop:1} and~\ref{prop:2}.
\end{proof}

\cite{Acar00} have shown that the number of additional cache misses in a
work-stealing parallel computation is bounded by the product of the number of
deviations and the number of cache lines.
It is easy to see that only two types of nodes in a DAG can be deviations:
the touches and the child nodes of forks that are not chosen to execute
first.
Since we assume the future thread (left child) at a fork is always executed first,
only the right children of forks can be deviations.
Next, we bound the number of deviations incurred by a work-stealing parallel
execution to bound its cache overhead.

\begin{lemma}\label{deviationtouch}
Let $t$ be the future thread at a fork $v$ in a structured single-touch computation.
If $t$'s touch $x$ or $v$'s right child $u$ is a deviation,
then either $u$ is stolen or there is a touch by $t$ which is a deviation.
\end{lemma}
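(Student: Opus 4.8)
The plan is to prove the contrapositive: assuming that $u$ is not stolen and that no touch by $t$ is a deviation, I will show that neither $u$ nor the touch $x$ of $t$ is a deviation. The starting observation concerns the deque of the processor $p$ that executes the fork $v$. Since the future thread is run first, $p$ pushes $u$ onto the bottom of its deque and proceeds into $t$; every later push made while $p$ works inside $t$'s subtree goes below $u$, so $u$ stays the oldest (top) element and is therefore the only node a thief could take from $p$. Hence the hypothesis that $u$ is not stolen is equivalent to saying that nothing is ever stolen from $p$'s deque while $p$ processes $t$'s subtree, and in particular $p$ itself is the processor that eventually pops and executes $u$.

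Next I would track how $p$ advances along the chain of nodes of $t$. Apart from forks (which merely push a continuation of $t$ and never block) the only nodes of $t$ that can stall $p$ are the touches by $t$: such a touch becomes ready only once the last node of the future thread it touches has been executed, and that future thread may have been spawned outside $t$'s subtree, since the structured single-touch definition constrains the touches \emph{of} $t$, not the touches \emph{by} $t$. I would argue that whenever $p$ stands at a touch-by-$t$ node, its deque has unwound back to exactly $u$: each earlier fork of $t$ has had its continuation popped to let $p$ advance, and each sub-future spawned inside $t$ has had its own (deeper) continuations consumed before $p$ could return to the body of $t$. Consequently, if $p$ reaches a touch by $t$ whose touched value is not yet ready, then $p$ cannot execute that touch immediately after its local parent, so that touch is a deviation, and moreover $p$ pops $u$ prematurely, right after the touch's local parent rather than after the future parent of $x$; by Lemma~\ref{lemmafutureparentfirst} this is not the sequential predecessor of $u$, so $u$ would also be a deviation. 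Under the standing assumption that no touch by $t$ is a deviation, this situation cannot occur, so $p$ never stalls inside $t$ and flows through the entire thread.

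Finishing the argument is then routine. Because $p$ passes through every node of $t$, it executes the last node of $t$, namely the future parent of $x$, and only then runs out of ready work and pops $u$. Thus $p$ executes $u$ immediately after the future parent of $x$, which is exactly $u$'s sequential predecessor by Lemma~\ref{lemmafutureparentfirst}, so $u$ is not a deviation. Furthermore, the future parent of $x$ is now completed, so when the processor that runs $x$'s local parent finishes it, the touch $x$ is ready and is executed immediately afterwards; since the local parent is the sequential predecessor of $x$ (again using Lemma~\ref{lemmafutureparentfirst}, which places the future parent before the local parent), the touch $x$ is not a deviation either. Taking the contrapositive yields the stated implication.

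The step I expect to be the main obstacle is the middle one: making precise that a stall of $p$ inside $t$'s subtree must surface as a deviation at a touch by $t$ (rather than only at some deeper touch), and that it forces the premature pop of $u$. The delicate point is that the future thread touched by $t$ may be spawned outside $t$'s subtree, so its completion is not under $p$'s control; the claim to nail down is that, no matter which processor ultimately completes that future, if it is not completed by the time $p$ reaches the corresponding touch then that touch, being a node of $t$, is itself the deviation, and that at that instant $p$'s deque has unwound to exactly $u$ so that the premature pop indeed occurs. I would handle this with a careful invariant on the deque contents as $p$ advances along $t$, combined with the ordering guaranteed by Lemma~\ref{lemmafutureparentfirst}.
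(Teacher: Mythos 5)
Your proof is correct and follows essentially the same route as the paper's: both argue the contrapositive, maintain the invariant that $u$ sits unstolen in $p$'s deque with the continuations of $t$ pushed below it, use Lemma~\ref{lemmafutureparentfirst} to characterize a deviation at a touch as its local parent being executed before its future parent, and conclude that $p$ flows through all of $t$ and pops $u$ immediately after $t$'s last node. The deque-unwinding step you flag as the delicate point is exactly the step the paper also relies on, stated there with essentially the same level of detail.
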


\begin{proof}
By Lemma~\ref{lemmafutureparentfirst},
a touch is a deviation if and only if its local parent is executed before its future parent.
Now suppose a processor $p$ executes $v$ and pushes $u$ into its deque.
Assume that $u$ is not stolen and no touches by $t$ are deviations.
Thus, $u$ will stay in $p$'s deque until $p$ pops it out.
The proof of this lemma is similar to that of Lemma~\ref{lemmafutureparentfirst}.
After $p$ spawns thread $t$ at $v$, it moves to execute $t$.
When $p$ executes ``ordinary" nodes in $t$, no nodes are pushed into or popped
out of $p$'s deque and hence $u$ is still the next node in the deque to pop.
When $p$ executes a fork in $t$, it pushes $t$ (more specifically,
the right child of that fork) into its deque, right below $u$.
Since a thief processor always steals from the top of a deque,
and by hypothesis $u$ is not stolen, $t$ cannot be stolen.
Thus, $p$ will always resume $t$ before $u$
and then $u$ will become the next node in the deque to pop.
When $p$ executes the local parent of a touch by $t$,
the future parent of that touch must have been executed,
since we assume that touch is not a deviation.
Thus, $p$ can continue to execute that touch immediately and keep moving on in $t$
with its deque unchanged.
Therefore, $p$ will finally get to the local parent of $x$ and then pop $u$ out from
its deque, since $x$ is a descendant of $u$ and $x$ cannot be execute yet.
Hence, neither $x$ nor $u$ can be a deviation.
\end{proof}

\begin{theorem}\label{futurefirstupperbound}
If, at each fork, the future thread is chosen to execute first,
then a parallel execution with work stealing incurs $O(PT_{\infty}^2)$ deviations and
$O(CPT_{\infty}^2)$ additional cache misses in expectation on a structured single-touch
computation,
where (as usual) $P$ is the number of processors involved in this computation,
$T_{\infty}$ is the computation span, and $C$ is the number of cache lines.
\end{theorem}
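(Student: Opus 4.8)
The plan is to reduce the statement to a bound on the number of deviations and then charge those deviations against steals, exploiting the single-touch structure. By Acar \emph{et al.}~\cite{Acar00}, the additional cache misses are at most $C$ times the number of deviations, so it suffices to prove that a work-stealing execution incurs $O(PT_\infty^2)$ deviations in expectation. Because the future thread is executed first at every fork, the only candidate deviations are touches and right children of forks. I would organize the count by future thread: to each future thread $t$, spawned at a fork $v$ with right child $u$ and touched exactly once (by the single-touch constraint) at a node $x$, I associate the pair $\{x,u\}$, which contributes at most two deviations. Hence it suffices to bound the number of future threads $t$ for which $x$ or $u$ is a deviation; call this set $D$.

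Next I would apply Lemma~\ref{deviationtouch}. For every $t \in D$, either $u$ is stolen, or some touch $w$ performed by $t$ is itself a deviation. If $u$ is stolen, I charge $t$ to that steal and stop; since distinct future threads have distinct forks, hence distinct right children, each steal is charged in this way by at most one thread. Otherwise $w$ is the unique touch of another future thread $t'$, so $w = x_{t'}$, and I charge $t$ to $t'$, continuing the argument at $t'$ (note $t' \in D$, since its touch $w$ is a deviation). Two structural facts control this process. First, single touch makes the map $t \mapsto t'$ injective: the node $w = x_{t'}$ lies in thread $t$, so $w$ determines $t$, while $w$ also determines $t'$; consequently the charging chains never merge and are vertex-disjoint. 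Second, $w$ lies in thread $t$, whose last node is the future parent of $t$'s own touch $x$, so $w = x_{t'}$ is a strict ancestor in the DAG of $x$. Thus along any chain the touches $x_{t'}, x_{t''}, \dots$ strictly ascend in the DAG order; the chain is therefore acyclic and its members lie on a single directed path, so it contains at most $T_\infty$ threads before terminating at a thread whose right child is stolen.

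Finally I would assemble the pieces. By injectivity the chains partition $D$, each chain ends at a distinct stolen right child, and so the number of chains is at most the total number of successful steals, which is $O(PT_\infty)$ in expectation for parsimonious work stealing~\cite{Arora98}. Since each chain has length $O(T_\infty)$, we obtain $|D| = O(T_\infty)\cdot O(PT_\infty) = O(PT_\infty^2)$, hence $O(PT_\infty^2)$ deviations and, multiplying by $C$, the desired $O(CPT_\infty^2)$ additional cache misses. I expect the main obstacle to be making the second structural fact rigorous---showing that the chain of deviating touches really follows a directed path and is thus bounded by the span---since this is precisely the place where the single-touch discipline (each future touched once, by a descendant of the fork's right child) is indispensable; without it the charging could both merge and cycle, and the span bound would fail.
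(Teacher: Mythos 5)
Your proposal is correct and follows essentially the same route as the paper's proof: both reduce to counting deviations via Acar \emph{et al.}, observe that only touches and right children of forks can deviate, use Lemma~\ref{deviationtouch} together with the single-touch property to organize the deviating touches into disjoint chains anchored at steals, bound each chain's length by $T_\infty$ via the ancestor/directed-path argument, and multiply by the $O(PT_\infty)$ expected steals of Arora \emph{et al.} The only difference is presentational: you trace chains backward from deviations to steals, which packages in one step what the paper does by building chains forward from steals and then separately arguing by contradiction that no deviation escapes a chain.
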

\begin{proof}
\cite{Arora98} have shown that in a parallel execution with work stealing,
there are in expectation $O(PT_{\infty})$ steals.
Now let us count how many deviations these steals can incur.
A steal on the right child $u$ of a fork $v$ can make $u$ and $v$'s
corresponding touch $x_1$ deviations.
Suppose $x_1$ is a touch by a thread $t_2$,
then the right child of the fork of $t_2$ and $t_2$'s touch $x_2$ can be
deviations.
If $x_2$ is a deviation and $x_2$ is a touch by another thread $t_3$,
then the right child of the fork of $t_3$ and $t_3$'s touch $x_3$
can be deviation too.
Note that $x_2$ is a descendant of $x_1$ and $x_3$ is a descendant of $x_2$.
By repeating this observation,
we can find a chain of touches $x_1, x_2, x_3,..., x_n$,
called a \emph{deviation chain},
such that each $x_i$ and the right child of the corresponding fork of
$x_i$ can be deviations.
Since for each $i > 1$, $x_i$ is a descendant of $x_2$,
$x_1, x_2, x_3, \ldots, x_n$ is in a directed path in the computation DAG.
Since the length of any path is at most $T_{\infty}$,
we have $n \le T_{\infty}$.
Since each future thread has only one touch, there is only one deviation chain for a steal.
Since there are $O(PT_{\infty})$ steals in expectation in a parallel execution~\cite{Arora98},
we can find in expectation $O(PT_{\infty})$ deviation chains and in total
$O(PT_{\infty}^2)$ touches and right children of the corresponding forks involved, i.e., $O(PT_{\infty}^2)$ deviations involved.

Next, we prove by contradiction that no other touches or right children of
forks can be deviations. suppose there is touch $y$, such that $y$ or
the right child of the corresponding fork of $y$ is a deviation, and that $y$
is not in any deviation chain. The right child of the corresponding fork of
$y$ can not be stolen, since by hypothesis $y$ is not the first touch in any of those
chains. Thus by Lemma~\ref{deviationtouch}, there is a touch $y'$ by the future
thread of $y$ and $y'$ is a deviation. Note that $y's$ cannot be in any
deviation chain either. Otherwise $y$ and the deviation chain $y'$ is in will
form a deviation chain too, a contradiction. Therefore, by repeating such ``tracing back", we
will end up at a deviation touch that is not in any deviation chain and
has no touches as its ancestors.
Therefore, there are no touches by the
future thread of this touch, and the right child of the corresponding future
fork of it is not stolen, contradicting Lemma~\ref{deviationtouch}.

The upper bound on the expected number of additional cache misses follows from the result of \cite{Acar00} that the number of additional cache misses in a
work-stealing parallel computation is bounded by the product of the number of
deviations and the number of cache lines.
\end{proof}

The bound on the number of deviations in Theorem~\ref{futurefirstupperbound} is
tight, and the bound on the number of additional cache misses is tight within a
factor of $C$, as shown below in Theorem~\ref{futurefirstlowerbound}.

\begin{figure}
\centering

\subfigure[]{\includegraphics[width=3.2cm]{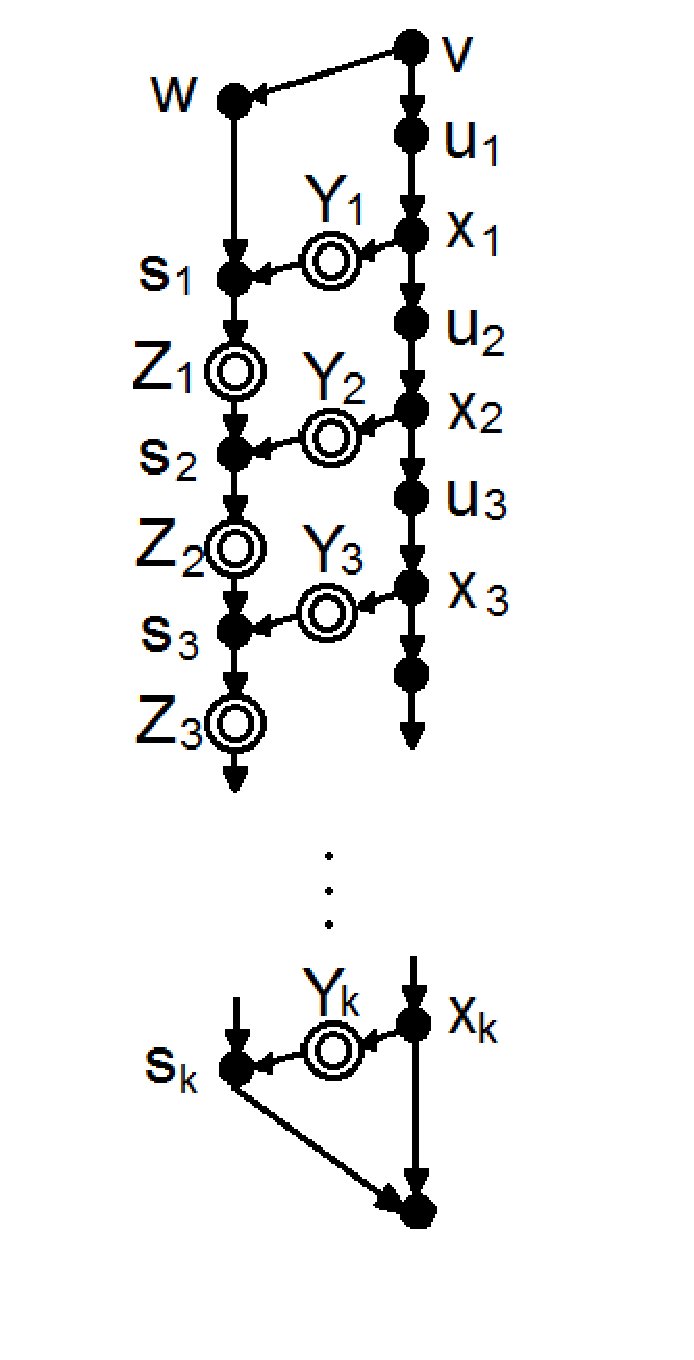}}
\subfigure[]{\includegraphics[width=4.0cm]{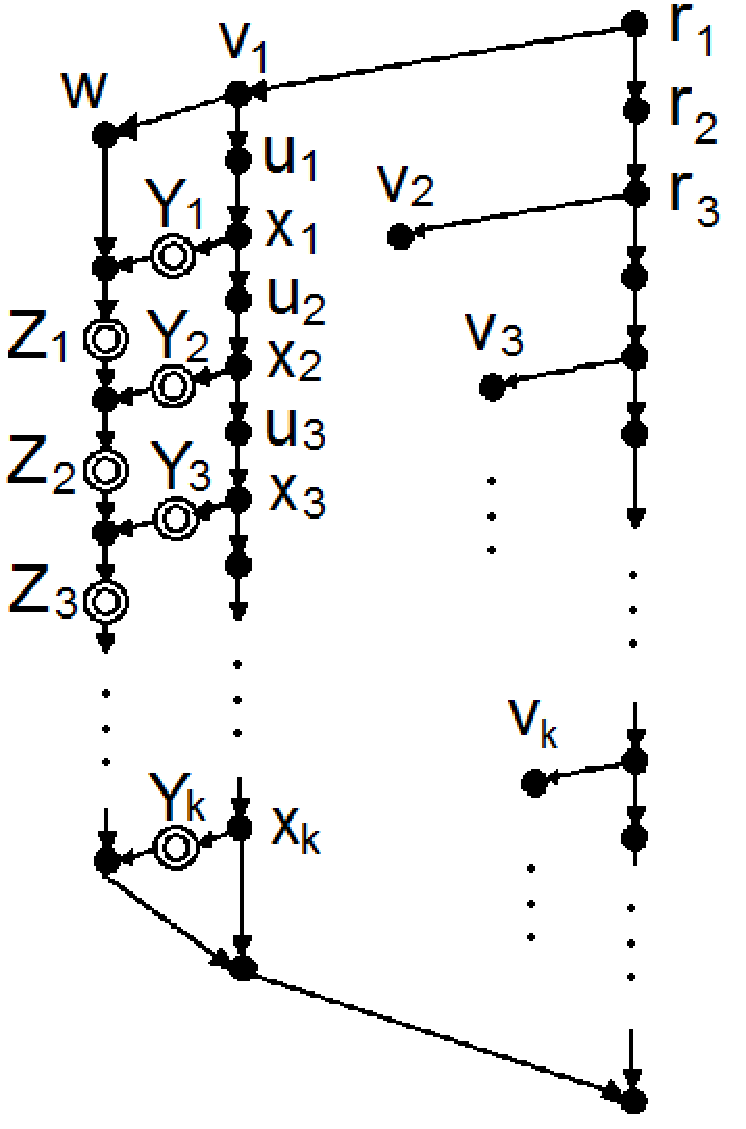}}
\subfigure[]{\includegraphics[width=3.2cm]{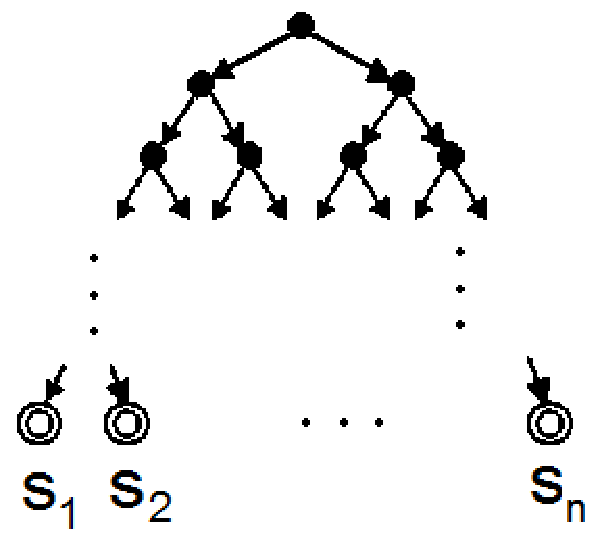}}

\caption{Figure (c) shows a DAG on which work stealing can incur $\Omega(PT_{\infty}^2)$ deviations and $\Omega(PT_{\infty}^2)$ additional cache misses. It uses the DAGs in (a) and (b) as building blocks.}
\label{fig_futurefirst_lowerbound}
\end{figure}

\begin{theorem}\label{futurefirstlowerbound}
If, at each fork node, the future thread is chosen to execute first, then
a parallel execution with work stealing can incur $\Omega(PT_{\infty}^2)$
deviations and $\Omega(PT_{\infty}^2)$ additional cache misses on a structured
single-touch computation,
while the sequential execution of this computation incurs $O(PT_{\infty}^2/C)$ cache misses.
\end{theorem}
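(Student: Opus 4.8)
The plan is to prove this matching lower bound constructively, by exhibiting an explicit family of structured single-touch DAGs, parameterized by $P$, $T_\infty$, and $C$, together with a legal work-stealing execution that incurs $\Omega(PT_\infty^2)$ deviations. The construction runs the upper-bound analysis of Theorem~\ref{futurefirstupperbound} in reverse: since that proof bounds deviations by (number of steals)$\,\times\,$(deviation-chain length)$\,=\,O(PT_\infty)\cdot O(T_\infty)$, to match it I would force $\Omega(PT_\infty)$ steals, each triggering a deviation chain of length $\Omega(T_\infty)$. The building blocks are those sketched in Figure~\ref{fig_futurefirst_lowerbound}: block (a) is a minimal single-touch gadget---a fork $v$ with a short future thread whose corresponding touch $x$ lies on the parent side---arranged so that stealing the right child of $v$ makes both $x$ and that right child deviations, using the characterization behind Lemma~\ref{deviationtouch} (via Lemma~\ref{lemmafutureparentfirst}) that a touch deviates exactly when its local parent runs before its future parent. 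Block (b) chains $\Theta(T_\infty)$ copies of (a) into a spine of nested single-touch threads, each touched by the next, so that a single steal at the top of the spine cascades deviations all the way down, producing one deviation chain of length $\Theta(T_\infty)$. Block (c) composes $\Theta(PT_\infty)$ independent copies of (b) as the leaves of a shallow balanced tree of forks.

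First I would verify the static properties of the composite DAG. I must check that it is genuinely a structured single-touch computation: each future thread is touched exactly once, that touch is a descendant of the corresponding fork's right child, and every local parent of a touch is a descendant of the fork---all immediate from the gadget design. I also need the span to be $\Theta(T_\infty)$: each spine contributes $\Theta(T_\infty)$, and the tree over $\Theta(PT_\infty)$ leaves adds depth $O(\log(PT_\infty)) = O(T_\infty)$, so the longest directed path is $\Theta(T_\infty)$, while the total work is $\Theta(PT_\infty^2)$.

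The dynamic argument is where I expect the main obstacle. I would specify a concrete adversarial interleaving of the $P$ processors' steal attempts that forces the right child $u$ at the top of each of the $\Theta(PT_\infty)$ spines to be \emph{stolen} rather than popped locally, and that each such steal then propagates a full-length cascade. The delicate point is that thieves take from the \emph{top} of a deque while the owner pops from the bottom, so I must exploit the fact that when the owner dives into the left (future) thread of a spine-top fork it pushes later forks' right children beneath $u$, causing $u$ to migrate toward the top and become available to a thief. I then have to show this schedule respects the deque discipline and node-readiness constraints throughout, and that the steal genuinely desynchronizes the future/local-parent ordering at \emph{every} level of the spine so that Lemma~\ref{deviationtouch} forces the cascade to continue rather than terminate. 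Ruling out ``early re-synchronization'' partway down a spine is the crux, and I would tune the gadget offsets precisely to prevent it. Summing $\Omega(PT_\infty)$ triggering steals, each yielding $\Omega(T_\infty)$ deviations, gives the claimed $\Omega(PT_\infty^2)$ deviations.

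Finally, for the cache counts I would fix a memory layout that decouples the two regimes. Assigning blocks so that consecutive nodes in the sequential order reuse a block in runs of length $\Theta(C)$, with distinct blocks across different runs and spines, the sequential execution incurs only $O(\mathrm{work}/C) = O(PT_\infty^2/C)$ misses, establishing the claimed sequential bound. In the parallel execution I would place the deviated nodes so that a processor executing one of them out of order finds its block cold, charging at least one additional miss per deviation; this yields $\Omega(PT_\infty^2)$ additional misses and exhibits exactly the factor-$C$ gap against the $O(CPT_\infty^2)$ upper bound. The secondary difficulty is pinning down a layout that simultaneously keeps sequential reuse high and guarantees that every deviated node misses in the parallel cache; I would adapt the block-assignment technique of Acar \emph{et al.}~\cite{Acar00} used in Figure~\ref{figure:worstcaseDAGinSpoonhower09}, but tuned to charge only one miss (rather than $C$) per deviation, which is all that an $\Omega(PT_\infty^2)$ bound requires.
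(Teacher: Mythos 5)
Your overall architecture---force $\Theta(PT_\infty)$ steals, make each worth $\Theta(T_\infty)$ deviations, and compose copies as leaves of a shallow fork tree---matches the paper's, and your static checks (the structured single-touch property, span $\Theta(T_\infty+\log(\cdot))$) are the right ones. But the two load-bearing pieces are exactly the ones you defer, and the paper resolves both with devices your sketch does not reconstruct. Your per-spine mechanism is a \emph{nested} deviation chain: one steal at the top must desynchronize the future-parent/local-parent order at every one of $\Theta(T_\infty)$ successively enclosing threads, and you correctly identify ``early re-synchronization'' as the crux without resolving it. The paper's gadget (Figure~\ref{fig_futurefirst_lowerbound}(a)) avoids this entirely: the thief steals $u_1$ past a single node $w$, and the future parents of all $k=\Theta(T_\infty)$ touches $s_1,\dots,s_k$ lie behind that one unexecuted node, so all $k$ local parents run first during the thief's solo run --- a fan-out from one blocked node, requiring no cascade and no level-by-level timing argument. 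The $T_\infty^2$ factor then comes from repeating this gadget $\Theta(T_\infty)$ times along a spine executed by a fixed group of three processors (one well-timed steal per repetition), with only $\Theta(P)$ such groups at the leaves of the tree --- not $\Theta(PT_\infty)$ leaves each requiring its own perfectly timed steal.

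The cache accounting also has a gap as stated. You want the sequential execution to incur $O(\mathrm{work}/C)$ misses while the parallel execution incurs one extra miss per deviation, where deviations are a constant fraction of all nodes; these two requirements are in tension, and ``runs of length $\Theta(C)$ reusing a block'' cannot meet both: with LRU, reordering or splitting such a run costs $O(1)$ extra misses per run, giving only $O(\mathrm{work}/C)$ additional misses in total, not $\Omega(\mathrm{work})$. What is actually needed, and what the paper builds, is the LRU off-by-one thrashing layout: each $Y_i$ accesses $m_1,\dots,m_C$ in order, each $Z_i$ accesses them in reverse, and $u_i,x_i$ touch a $(C{+}1)$-st block, so that sequentially the working set is exactly $C$ and almost every access hits, while the thief---which executes only the $Y_i$'s interleaved with $m_{C+1}$---evicts each block one step before reusing it and misses at \emph{every} node. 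Without that specific layout (or an equivalent), the claimed $\Omega(PT_\infty^2)$ additional misses against an $O(PT_\infty^2/C)$ sequential baseline does not follow from ``one cold block per deviation.''
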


\begin{proof}
Figure~\ref{fig_futurefirst_lowerbound}(c) shows a computation DAG on which we can get the bounds we want to prove. The DAG in Figure~\ref{fig_futurefirst_lowerbound}(c) uses the DAGs in Figures~\ref{fig_futurefirst_lowerbound}(a) and \ref{fig_futurefirst_lowerbound}(b) as building blocks. Let's look at Figures~\ref{fig_futurefirst_lowerbound}(a) first. Suppose there are two processors $p_1$ and $p_2$ executing the DAG in Figure~\ref{fig_futurefirst_lowerbound}(a). Suppose $p_2$ executes $v$, pushes $u_1$ into its deque, and then falls asleep before executing $w$. Now suppose $p_1$ steals $u_1$. For each $i \le k$, neither $s_i$ nor $Z_i$ can be
executed since $w$ has not been executed yet. Now $p_1$ takes a solo run, executing $u_1, x_1, Y_1, u_2, x_2, Y_2,..., x_k, Y_k$. After $p_1$ finishes, $p_2$ wakes up and executes the rest of the computation DAG. Note that the right (local) parent of $s_i$ is executed before the left (future) parent of the touch is executed. Thus, by Lemma~\ref{lemmafutureparentfirst}, each $s_i$ is a deviation. Hence, this parallel execution incurs $k$ deviations and the computation span of the computation is $\Theta(k)$.

Now let us consider a parallel execution of the computation in~\ref{fig_futurefirst_lowerbound}(b). For each $i\le k$, the subgraph rooted at $v_i$ is identical to the computation DAG in~\ref{fig_futurefirst_lowerbound}(a) (except that the last node of the subgraph has an extra edge pointing to a node of the main thread). Suppose there are three processors $p_1$, $p_2$, and $p_3$ working on the computation. Assume $p_2$ executes $r_1$ and $v_1$ and then falls asleep when it is about to execute $w$. $p_3$ now steals $r_2$ from $p_2$ and then falls asleep too. Then $p_1$ steals $u_1$ from $p_2$'s deque. Now $p_1$ and $p_2$ execute the subgraph rooted at $v_1$ in the same way they execute the DAG in~\ref{fig_futurefirst_lowerbound}(a). After $p_1$ and $p_2$ finish, $p_3$ wakes up, executes $r_2$. Now these three processors start working on the subgraph rooted at $r_3$ in the same way they executed the graph rooted at $r_1$. By repeating this, the execution ends up incurring $k^2$ deviations when all the $k$ subgraphs are done. Since the length of the path $r_1,r_2,r_3...$ on the right-hand side is $\Theta(k)$, the computation span of the DAG is still $\Theta(k)$.

Now we construct the final computation DAG, as in Figure~\ref{fig_futurefirst_lowerbound}(c). The ``top" nodes of the DAG are all forks, each spawning a future thread. Thus, they form a binary tree and the number of threads increase exponentially. The DAG stops creating new threads at level $\Theta(\log n)$ when it has $n$ threads rooted at $S_1, S_2,..., S_n$, respectively. For each $i$, the subgraph rooted at $S_i$ is identical to the DAG in~\ref{fig_futurefirst_lowerbound}(b). Suppose there are $3n$ processors working on the computation. It is easy to see $n$ processors can eventually get to $S_1, S_2,..., S_n$. Suppose they all fall asleep immediately after executing the first two nodes of $S_i$(corresponding to $r_1$ and $v_1$ in Figure~\ref{fig_futurefirst_lowerbound}(b)) and then each two of the rest $2n$ free processors join to work on the subgraph rooted at $S_i$, in the same way $p_1$, $p_2$ and $p_3$ did in Figure~\ref{fig_futurefirst_lowerbound}(b). Therefore, this execution will finally incur $nk^2$ deviations, while the computation span of the DAG is $\Theta(k+\log n)$. Therefore, by setting $n=P/3$, we get a parallel execution that incurs $\Omega(PT_{\infty}^2)$ deviations, when $\log P = O(k)$.

To get the bound on the number of additional cache misses, we just need to
modify the graph in~\ref{fig_futurefirst_lowerbound}(a) as follows. For each
$1\le i \le k$, $Y_i$ consists of a chain of $C$ nodes $y_{i1}, y_{i2},...,
y_{iC}$, where $C$ is the number of cache lines. $y_{i1}, y_{i2},..., y_{iC}$
access memory blocks $m_1, m_2,...,m_C$, respectively. Similarly, each $Z_i$
consists of a chain of $C$ nodes $z_{i1}, z_{i2},..., z_{iC}$. $z_{i1},
z_{i2},..., z_{iC}$ access memory blocks $m_C, m_{C-1},..., m_1$,
respectively. all $s_i$ access memory block $m_C$. For all $1\le i \le k$,
$u_i$ and $x_i$ both access memory block $m_{C+1}$. It does not matter which
memory blocks the other nodes in the DAG access. For simplicity, assume the
other nodes do not access memory. In the sequential execution, the single
processor has $m_1, m_2,...,m_C$ in its cache after executing
$v,w,u_1,x_1,Y_1,Z_1$ and it has incurred $(C+1)$ cache misses so far. Now it
executes $u_2$ and $x_2$, incurring one cache miss at node $u_2$ by replacing
$m_{C}$ with $m_{C+1}$ in its cache, since $m_{C}$ is the least recently used
block. When it executes $Y_2$ and $Z_2$, it only incurs one cache miss by
replacing $m_{C+1}$ with $m_{C}$ at the last node of $Y_2$, $y_{2C}$. Likewise,
it is easy to see that the sequential execution will only incur cache misses at
nodes $u_i$ and at the last nodes of $Y_i$ for all $i$. Hence, the sequential
execution incurs only $O(k+C)$ cache misses. When $k = \Omega(C)$, the
sequential execution incurs only $O(k)$ cache misses.

Now consider the parallel
execution by two processors $p_1$ and $p_2$ we described before. $p_2$ will incur only $C$ cache
misses, since $Z_i$ and $s_i$ only access $m$ different blocks $m_1,
m_2,...,m_C$ and hence $p_2$ doesn't need to swap any memory blocks out of its cache. However, $p_1$ will incur lots of cache misses. After executing
each $Y_i$, $p_1$ will execute $u_{i+1}$. Thus at $u_{i+1}$, one cache miss is
incurred and $m_{1}$ is replaced with $m_{C+1}$, since $m_{1}$ is the least
recently used block. Then, when $p_1$ executes the first node $y_{(i+1)1}$ in $Y_i$,
, $m_1$ is not in its cache. Since $m_{2}$ now becomes the least
recently used memory block in $p_1$'s cache, $m_2$ is replaced by
$m_1$. Thus, $m_2$ will not be in the cache when it is in need at $y_{(i+1)2}$. Therefore, it is obvious that $p_1$ will incur a cache miss at each node in $Y_i$ and hence incur $Ck$ cache misses in total in the entire execution. Note that the computation span of this modified DAG is $\Theta(Ck)$, since each $Z_i$ now has $C$ nodes. Therefore, the sequential execution and the parallel execution actually incur $\Theta(T_{\infty}/C)$ and $\Theta(T_{\infty})$, respectively, when $\log P = O(k)$. Therefore, if we use this modified DAG as the building blocks in~\ref{fig_futurefirst_lowerbound}(c), we will get the bound on the number of additional cache misses stated in the theorem.
\end{proof}

\subsection{Parent Thread First at Each Fork}
\label{section:Parent Thread First at Each Fork}
In this section, we show that if the parent thread is always executed first at a fork, a work-stealing parallel execution of a structured single-touch computation can incur $\Omega(tT_{\infty})$ deviations and $\Omega(CtT_{\infty})$ additional cache misses, where $t$ is the number of touches in the computation, while the corresponding sequential execution incurs only a small number of cache misses.
This bound matches the upper bound for general, unstructured future-parallel computations \cite{Spoonhower09}\footnote[2]{The bound on the expected number of deviations in  \cite{Spoonhower09} is actually $O(PT_{\infty}+tT_{\infty})$. However, as pointed out in  \cite{Spoonhower09}, a simple fork-join computation can get $\Omega(PT_{\infty})$ deviations. Hence we focus on the more interesting part $\Omega(tT_{\infty})$.}. This result, combined with the result in Section~\ref{section_futurethreadfirst}, shows that choosing the future threads at forks to execute first achieves better cache locality for work-stealing schedulers on structured single-touch computations.

\begin{figure}
\centering

\subfigure[]{\includegraphics[width=4.8cm]{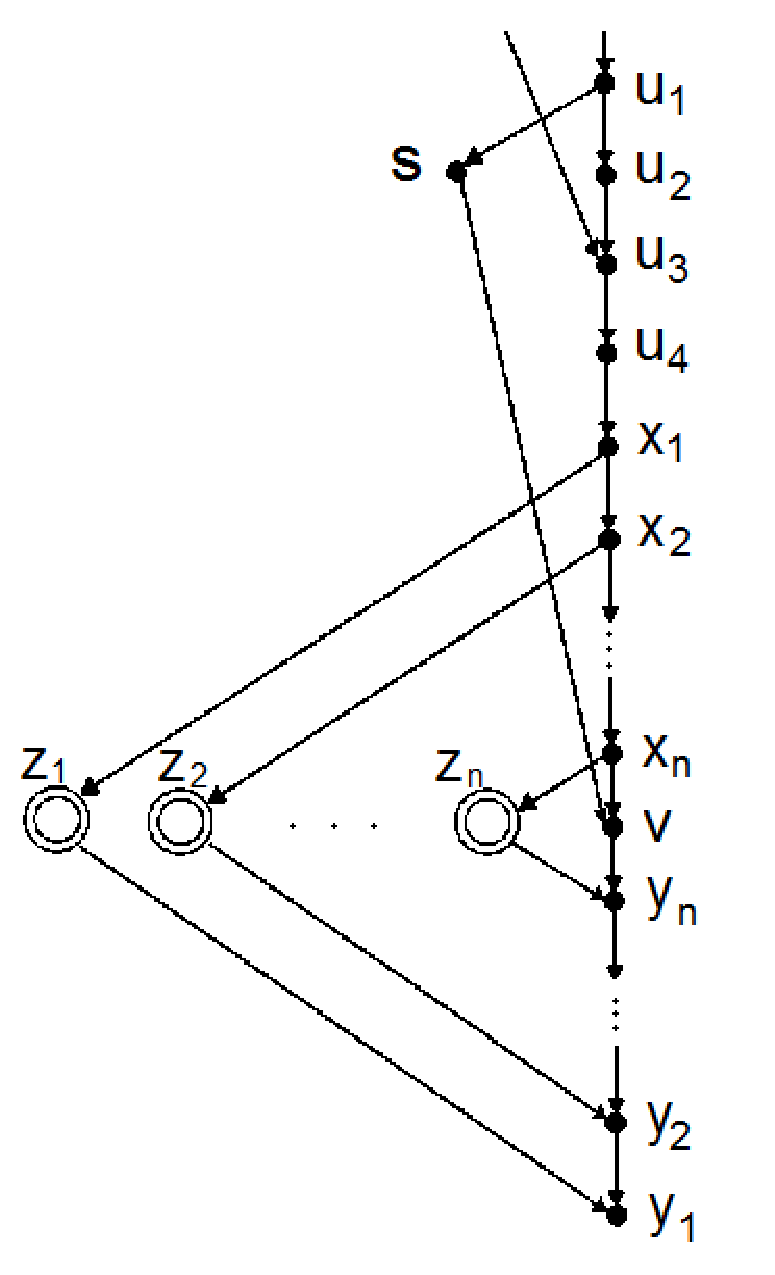}}
\subfigure[]{\includegraphics[width=4.8cm]{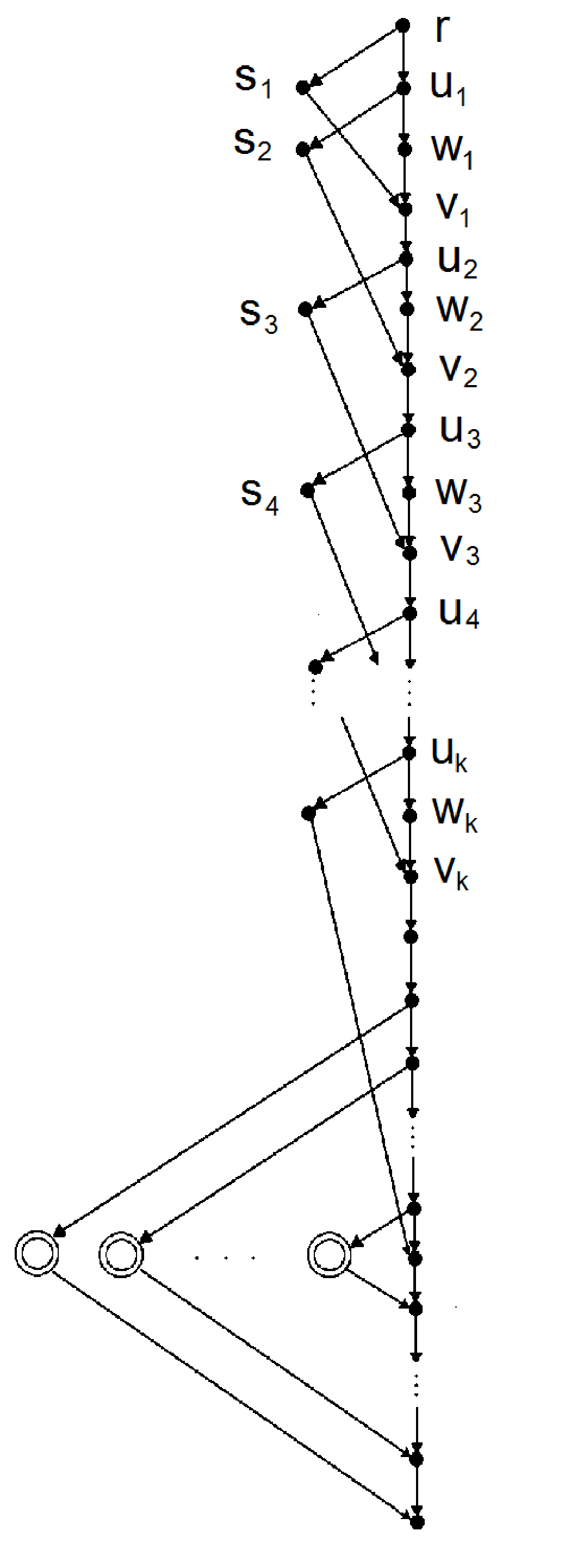}}

\caption{DAGs used by Figure~\ref{fig_parentfirst_lowerbound2} as building blocks.}
\label{fig_parentfirst_lowerbound1}
\end{figure}

\begin{theorem}\label{parentthreadfirst}
If, at each fork, the parent thread is chosen to execute first, then a parallel execution with work stealing can incur $\Omega(tT_{\infty})$ deviations and $\Omega(CtT_{\infty})$ additional cache misses on a structured single-touch computation, while the sequential execution of this computation incurs only $O(C+t)$ cache misses.
\end{theorem}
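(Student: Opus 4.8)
The plan is to prove this lower bound by exhibiting an explicit adversarial computation together with a specific work-stealing schedule, exactly mirroring the strategy of Theorem~\ref{futurefirstlowerbound} but now exploiting the \emph{parent-first} rule. I would reuse the cache-thrashing memory layout from that proof: $C$ blocks $m_1,\dots,m_C$ laid out so that one chain accesses them in the order $m_1,\dots,m_C$ and a paired chain accesses them in the reverse order, together with a few ``pivot'' nodes touching an extra block. Against an LRU cache this layout has the property that a processor whose traversal respects the sequential order keeps its entire working set resident (paying only cold misses), whereas a processor whose traversal is offset by one position relative to the sequential order is forced to evict-and-reload and thus incurs a miss at essentially every node.

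First I would build the single-touch gadget of Figure~\ref{fig_parentfirst_lowerbound1}(a), which realizes the $\Omega(CT_\infty)$-misses-from-one-touch phenomenon already sketched for the DAG of Figure~\ref{figure:worstcaseDAGinSpoonhower09}. Here a parent thread forks a future thread whose body is a long chain (length $\Theta(T_\infty)$) and contains a single touch of that future. Under the parent-first rule the processor at the fork runs down the parent and pushes the future thread onto its deque, so the adversary lets a thief steal and run the future thread on a separate processor. I would assign the blocks $m_1,\dots,m_C$ along the two chains so that the sequential execution, which visits them in one fixed cache-friendly order, keeps the working set resident, whereas the stolen parallel traversal visits the same blocks offset by one position under LRU and therefore misses at essentially every node. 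The same offset makes the touch, and the entire $\Theta(T_\infty)$-length chain executed after it, deviate from their sequential positions. The gadget thus produces $\Omega(T_\infty)$ deviations and $\Omega(CT_\infty)$ additional cache misses from a single touch, while its sequential execution pays only $O(C)$ misses.

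Next I would amplify to $t$ touches while keeping the span at $\Theta(T_\infty)$, using the building blocks of Figure~\ref{fig_parentfirst_lowerbound1} as assembled in Figure~\ref{fig_parentfirst_lowerbound2}. Placing $t$ independent copies of the gadget \emph{in parallel}---hung off a binary tree of forks at the top, as in Figure~\ref{fig_futurefirst_lowerbound}(c)---gives $t$ touches and, with $P=\Theta(t)$ processors stealing as above, a total of $\Omega(tT_\infty)$ deviations and $\Omega(CtT_\infty)$ additional misses; the tree adds only $O(\log t)$ to the span, so $T_\infty$ is unchanged up to constants. Crucially, I would let all copies reuse the same $C$ memory blocks $m_1,\dots,m_C$: the sequential execution then pays $O(C)$ cold misses once and only $O(1)$ additional misses per gadget (at its pivot nodes), for $O(C+t)$ total, matching the stated sequential bound.

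The main obstacle is the simultaneous reconciliation of the two extremes inside a single gadget: I must arrange the dependency structure and the memory assignments so that, \emph{under LRU and the parent-first rule}, the sequential traversal is perfectly cache-friendly while the adversarially stolen parallel traversal is offset at every step and hence misses at every step, and I must verify that this one-position offset is genuinely forced by the single touch rather than repairable by the scheduler's subsequent choices. A secondary but delicate point is the span accounting under amplification: the $t$ copies must truly overlap in time, so that the span is $\Theta(T_\infty+\log t)$ rather than $\Theta(tT_\infty)$, which requires checking that the adversarial steals can be realized concurrently across all copies with the chosen number of processors.
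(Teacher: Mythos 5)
Your plan diverges from the paper's proof at the amplification step, and the divergence is not benign. You propose to hang $t$ independent copies of the gadget off a binary tree and give each copy its own thief, so that $P=\Theta(t)$ and there are $\Theta(t)$ steals. That satisfies the literal asymptotic statement, but only by forcing $t=\Theta(P)$, i.e., you are really re-proving an $\Omega(PT_{\infty})$-type bound --- exactly the part the paper's own footnote dismisses as uninteresting, because simple fork-join computations already achieve it and because future-first scheduling can itself incur $\Omega(PT_{\infty}^2)$ deviations (Theorem~\ref{futurefirstlowerbound}). With $P=\Theta(t)$ your lower bound $\Omega(tT_{\infty})$ does not even exceed the future-first guarantee $O(PT_{\infty}^2)=O(tT_{\infty}^2)$, so the construction fails to support the section's conclusion that parent-first is worse than future-first. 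The force of Theorem~\ref{parentthreadfirst} is that the $tT_{\infty}$ term is realized with $P=2$ and a \emph{single} steal. The paper achieves this with the gadget of Figure~\ref{fig_parentfirst_lowerbound1}(b): a chain of alternating forks and touches in which one steal of $s_1$ flips the parity of the order in which $w_i$ and $s_i$ are executed, and this parity flip propagates down the entire chain and then through the binary tree of Figure~\ref{fig_parentfirst_lowerbound2} to all $\Theta(t)$ leaf gadgets. This propagation mechanism is the heart of the proof and is entirely absent from your proposal.

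Second, your per-gadget accounting of deviations is not yet sound. If a thief steals the future thread and executes it as a contiguous chain in its sequential order, only $O(1)$ nodes of that chain are deviations: each subsequent node is executed immediately after its sequential predecessor by the same processor. An ``offset'' traversal of the memory blocks buys you cache misses but not, by itself, $\Omega(T_{\infty})$ deviations. The paper instead extracts $\Omega(T_{\infty})$ deviations from a single deviated touch by playing $n$ join nodes $y_1,\dots,y_n$ against $n$ chains $Z_1,\dots,Z_n$ in Figure~\ref{fig_parentfirst_lowerbound1}(a), so that the deviation at $u_3$ reorders a \emph{single-processor} execution from ``all $Z_i$, then all $y_i$'' to the interleaving $Z_n,y_n,Z_{n-1},y_{n-1},\dots$; this simultaneously flips which parent of each $y_i$ is executed first (making every $y_i$ a deviation) and produces the one-position LRU offset that costs $C$ misses per round. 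You correctly flag this reconciliation as the main obstacle, but resolving it requires that specific deque-driven interleaving structure, not merely a clever memory-block assignment on two concurrently running chains. As it stands, the proposal is missing both the deviation-generating gadget and the steal-propagation gadget, which are the two essential ideas of the paper's argument.
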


\begin{proof}
The final DAG we want to construct is in Figure~\ref{fig_parentfirst_lowerbound2}. It uses the DAGs in Figure~\ref{fig_parentfirst_lowerbound1} as building blocks.
We first describe how a single deviation at a touch $u_3$ can incur
$\Omega(T_{\infty})$ deviations and $\Omega(CT_{\infty})$ additional cache
misses in Figure~\ref{fig_parentfirst_lowerbound1}(a). In order to get the
bound we want to prove, here we follow the convention in
\cite{Acar00,Spoonhower09} to distinguish between touches and join nodes in the
DAG. More specifically, $y_i$ is a join node, not a touch, for each $1\le i\le
n$. For each $1\le i \le n$, node $x_i$ accesses memory block $m_1$ and $y_i$
accesses memory block $m_{C+1}$. $Z_i$ consists of a chain of $C$ nodes
$z_{i1},z_{i2},...,z_{iC}$, accessing memory blocks $m_1,m_2,...,m_{C}$
respectively. All the other nodes do not access memory. Assume in the sequential execution a single processor $p_1$ executes the entire DAG in Figure~\ref{fig_parentfirst_lowerbound1}(a). Suppose initially the left (future) parent of $u_3$ has already been executed. $p_1$ starts executing the DAG at $u_1$. Since $p_1$ always stays on the parent thread at a fork, it first pushes $s$ into its deque, continues to execute $u_2,u_3,u_4$, and then executes $x_1,x_2,...,x_n$ while pushing $z_{11}, z_{21},..., z_{n1}$ into its deque. Since $v$ cannot be executed due to $s$, $p_1$ pops $z_{n1}$ out of its deque and executes the nodes in $Z_n$. Then $p_1$ executes all the nodes in $Z_{n-1}, Z_{n-2},...,Z_{1}$, in this order. So far $p_1$ has only incurred $C$ cache misses, since all the nodes it has executed only access memory blocks $m_1,...,m_{C}$ and hence it did not need to swap any memory blocks out of its cache. Now $p_1$ executes $s, v$ and then $y_n,y_{n-1},...,y_1$, incurring only one more cache miss by replacing $m_1$ with $m_{C+1}$ at $y_n$. Hence, this execution incurs $O(C)$ cache misses in total. Note that the left parent of $y_i$ is executed before the right parent $y_i$ for all $i$.

Now assume in another execution by $p_1$, the left parent of $u_3$ is in $p_1$'s deque when $p_1$ starts executing $u_1$. Thus, $u_3$ is a deviation with respect to the previous execution. Since $u_3$ is not ready to execute after $p_1$ executes $u_2$, $p_1$ pops $s$ out of its deque to execute. Since $v$ is not ready, $p_1$ now pops the left parent of $u_3$ to execute and then executes $u_3, u_4, x_1, x_2,...,x_n, v$. Now $p_1$ pops $z_{n1}$ out and executes all the nodes $Z_n$. Note that $y_n$ is now ready to execute and the memory blocks in $p_1$'s cache at the moment are $m_1, m_2,..., m_C$. Now $p_1$ executes $y_n$, replacing the least recently used block $m_1$ with $m_{C+1}$. $p_1$ then pops $z_{(n-1)1}$ out and executes all the nodes $z_{(n-1)1}, z_{(n-1)2},...,z_{(n-1)}C$ in $Z_{n-1}$ one by one. When $p_1$ executes $z_{(n-1)1}$, it replaces $m_2$ with $m_1$, and when it executes $z_{(n-1)2}$, it replaces $m_3$ with $m_2$, and so on. The same thing happens to all $Z_i$ and $y_i$. Thus, $p_1$ will incur a cache miss at every node afterwards, ending up with $\Omega(Cn)$ cache misses in total. Note that the computation span of this DAG is $T_{\infty} = \Theta(C+n)$. Thus, this execution with a deviation at $u_3$ incurs $\Omega(CT_{\infty})$ cache misses when $n = \Omega(C)$. Moreover, all $y_i$ are deviations and hence this execution incurs $\Omega(T_{\infty})$ deviations.

Now let us see how a single steal at the beginning of a thread results in $\Omega(T_{\infty})$ deviations and $\Omega(CT_{\infty})$ cache misses at the end of the thread. Figure~\ref{fig_parentfirst_lowerbound1}(b) presents such a computation. First we consider the sequential execution by a processor $p_1$. It is easy to see $p_1$ executes nodes in the order $r, u_1, w_1, s_2$, $s_1, v_1, u_2$, $w_2, v_2, u_3, w_3$, $s_4, s_3, v_3, u_4,...$. The key observation is that $w_i$ is executed before $s_i$ is executed for any odd-numbered $i$ while $w_i$ is executed after $s_i$ is executed for any even-numbered $i$. This statement can be proved by induction. Obviously, this holds for $i=1$ and $i=2$, as we showed before. Now suppose this fact holds for all $1,2,...,i$, for some even-numbered $i$. Now suppose $p_1$ executes $u_{i-1}$. Then $p_1$ pushes $s_i$ into its deque and executes $w_{i-1}$.
Since we know $w_{i-1}$ should be executed before $s_{i-1}$, $s_{i-1}$ has not been executed yet.
Moreover, $s_{i-1}$ must already be in the deque before $s_{i}$ was pushed into the deque, since $s_{i-1}$'s parent $u_{i-2}$ has been executed and $s_{i-1}$ is ready to execute.
Now $p_1$ pops $s_i$ out to execute.
Since $v_i$ is not ready to execute, $p_1$ pops $s_{i-1}$ out
and then executes $s_{i-1}, v_{i-1}, u_i$, and pushes $s_{i+1}$ into the deque.
Now $p_1$ continues to execute $w_i, v_i, u_{i+1}$ and pushes $s_{i+1}$ into its deque.
Then $p_i$ executes $w_{i+1}$ and pops $s_{i+2}$ out, since $v_{i+1}$
is not ready due to $s_{i+1}$.
Now we can see $w_{i+1}$ and $s_{i+2}$ have been executed, but $s_{i+1}$ and $w_{i+2}$ not yet. That is, $w_{i+1}$ is executed before $s_{i+1}$ and $w_{i+2}$ is executed after $s_{i+2}$.
Therefore, the statement holds for $i+1$ and $i+2$, and hence the proof completes.

The subgraph rooted at $u_k$ is identical to the graph in Figure~\ref{fig_parentfirst_lowerbound1}(a), with $v_k$ corresponding to $u_3$ in Figure~\ref{fig_parentfirst_lowerbound1}(a). Therefore, if $k$ is an even number, $v_k$'s left parent has been executed when $w_k$ is executed and hence the sequential execution will incur only $O(C)$ cache misses on the subgraph rooted at $u_k$.

Now consider the following parallel execution of the DAG in
Figure \ref{fig_parentfirst_lowerbound1}(b) by two processors $p_1$ and $p_2$.
$p_1$ first executes $r$ and pushes $s_1$ into its deque.
Then $p_2$ immediately steals $s_1$ and executes it.
Now $p_2$ falls asleep, leaving $p_1$ executing the rest of the DAG alone.
It is easy to see $p_1$ will execute the nodes in the DAG in the order
$u_1, w_1, v_1, u_2, w_2, s_3, s_2, v_2, u_3, w_3, v_3, u_4, s_4,...$
It can be proved by induction that $w_i$ is executed after $s_i$ is executed for
any odd-numbered $i$ while $w_i$ is executed before $s_i$ is executed for
any even-numbered $i$, which is opposite to the order in the sequential execution.
The induction proof is similar to that of the previous observation
in the sequential execution, so we omit the proof here.
If $k$ is an even number, $w_k$ will be executed before the left parent of $v_k$
and hence this execution will incur $\Omega(T_{\infty})$ deviations and
$\Omega(CT_{\infty})$ cache misses when $n = \Omega(C)$ and $n = \Omega(k)$.

The final DAG we want to construct is in Figure~\ref{fig_parentfirst_lowerbound2}.
This is actually a generalization of the DAG in Figure~\ref{fig_parentfirst_lowerbound1}(b).
Instead of having one fork $u_i$ before each touch $v_i$, it has two forks $u_i$ and $x_i$,
for each $i$. After each touch $v_i$, the thread at $y_i$ splits into two identical branches,
touching the futures spawned at $u_i$ and $x_i$, respectively.
In this figure, we only depict the right branch and omit the identical left branch.
As we can see, the right branch later has a touch $v_{i+1}$ touching the future $s_{i+1}$
spawned at the fork $x_i$. If we only look at the thread on the right-hand side,
it is essentially the same as the DAG in Figure\ref{fig_parentfirst_lowerbound1}(b).
The sequential execution of this DAG by $p_1$ is similar to that in Figure\ref{fig_parentfirst_lowerbound1}(b).
The only difference is that $p_1$ at each $y_i$ will execute the right branch
first and then the left branch recursively.
Similarly, it can be proved by induction that $w_i$ is executed before $s_i$ is executed for any odd-numbered $i$ while $w_i$ is executed after $s_i$ is executed for any even-numbered $i$. Obviously this also holds for each left branch. Now consider a parallel execution by two processors $p_1$ and $p_2$. $p_1$ first executes $r$. $p_2$ immediately steals $s_1$ and executes it and then sleeps forever. Now $p_1$ makes a solo run to execute the rest of the DAG. Again, we can prove by the same induction argument that $w_i$ is executed after $s_i$ is
executed for any odd-numbered $i$ while $w_i$ is executed before $s_i$ is executed
for any even-numbered $i$, which is opposite to the order in the sequential execution.
The above two induction proofs are a little more complicated than those for the DAG in Figure\ref{fig_parentfirst_lowerbound1}(b), but the ideas are essentially the same
(the only difference is now we have to prove the statements hold for the two identical
branches split at fork $y_i$ at the inductive step) and hence we omit the proofs again.

By splitting each thread into two after each $y_i$, the number of branches in the DAG increases exponentially. Suppose there are $t$ touches in the DAG. Thus, there are eventually $\Theta(t)$ branches and the height of this structure is $\Theta(\log t)$. At the end of each branch is a subgraph identical to the DAG in Figure~\ref{fig_parentfirst_lowerbound1}(a). Therefore, the parallel execution with only one steal can end up incurring $\Theta(tn)$ deviations and $\Theta(Ctn)$ cache misses. The sequential execution incurs only $\Theta(C+t)$ cache misses, since the sequential execution will incur only 2 cache misses by swapping $m_{C+1}$ in and out at each branch, after it incurs $C$ cache misses to load $m_1,m_2, ..., m_{C}$ at the first branch. hence, when $n = \Omega(\log t)$ and $n = \Omega(C)$, we get the bound stated in the theorem.
\end{proof}

\begin{figure}
\centering

\includegraphics[width=8cm]{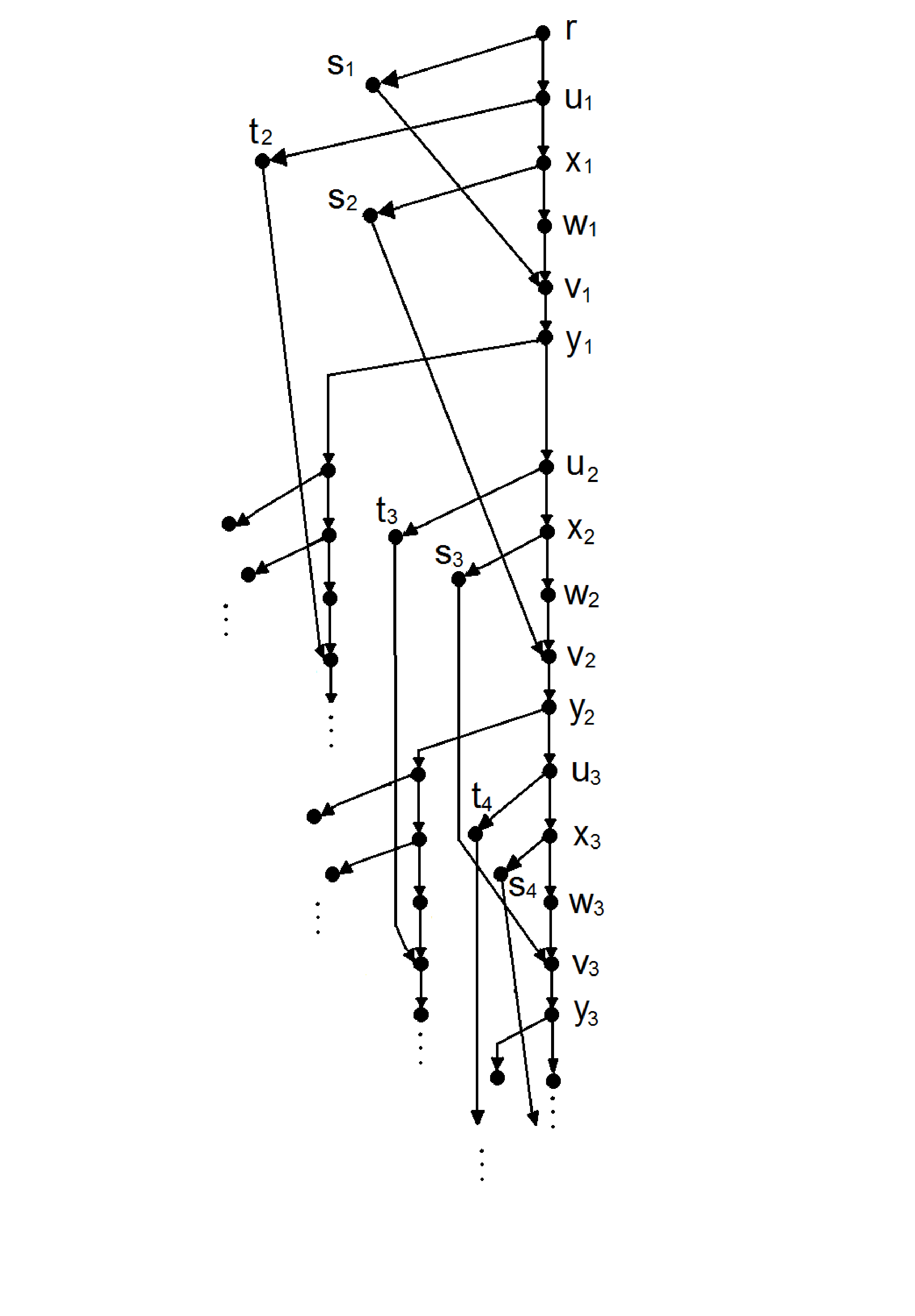}

\caption{A DAG on which work stealing can incur $\Omega(tT_{\infty})$
deviations and $\Omega(CtT_{\infty})$ if it chooses parents threads to execute
first at forks.
This example uses the DAGs in Figure~\ref{fig_parentfirst_lowerbound1} as building blocks.}
\label{fig_parentfirst_lowerbound2}
\end{figure}

\section{Other Kinds of Structured Computations}
\label{section:Other Kinds of Structured Computations}
It is natural to ask whether other kinds of structured computations can also
achieve relatively good cache locality.
We now consider two alternative kinds of restrictions.

\subsection{Structured Local-Touch Computations}\label{section_localtouch}
In this section, we prove that work-stealing parallel executions of
structured local-touch computations also have relatively good cache
locality,
if the future thread is chosen to execute first at each fork.
This result, combined with Theorems~\ref{futurefirstupperbound}
and~\ref{parentthreadfirst},
implies that work-stealing schedulers for structured computations
are likely better off choosing future threads to execute first at forks.

\begin{lemma}\label{lemmafutureparentfirstlocaltouch}
In the sequential execution of a structured local-touch computation where the future thread at a fork is always chosen to execute first,
any touch $x$'s future parent is executed before $x$'s local parent,
and the right child of any fork $v$ immediately follows the last node of
the future thread spawned at $v$, i.e.,
the future parent of the last touch of the future thread.
\end{lemma}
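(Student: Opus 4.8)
The plan is to mirror the inductive argument of Lemma~\ref{lemmafutureparentfirst}, generalizing it from one touch per future thread to several. As before, the sequential execution with the future thread chosen first at every fork behaves like a depth-first traversal: at a fork $v$ spawning future thread $t$, the processor pushes $v$'s right child into its deque and dives into $t$; every subsequent fork inside $t$ pushes a continuation of $t$ below $v$'s right child, so the processor always resumes $t$ before $v$'s right child, and $v$'s right child is not popped until the whole subtree rooted at $t$ has been executed. The two statements to establish are exactly the generalizations of Properties~\ref{prop:1} and~\ref{prop:2}: every touch has its future parent executed before its local parent, and $v$'s right child is executed immediately after the last node of $t$, which is the future parent of $t$'s last touch.

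I would carry out the induction over future threads, processing a thread $t$ only after every future thread nested inside $t$ has been processed (innermost first), so that the inductive hypothesis supplies both properties for all of $t$'s descendant threads. The core step is to show that the depth-first traversal of $t$'s subtree never blocks. The only touches the processor can meet during this traversal are touches performed inside $t$'s subtree, i.e. touches of $t$'s descendant threads; by the definition of a structured local-touch computation each such touch lies in the parent thread of the descendant that computes it and is a descendant of the corresponding fork's right child, so the inductive hypothesis (future parent before local parent) guarantees the future parent is already executed by the time the processor reaches the local parent, and the traversal proceeds without interruption. Crucially, $t$'s own touches do not obstruct this traversal: by the local-touch definition they sit in $t$'s parent thread as descendants of $v$'s right child, and hence are not reached while the subtree of $t$ is still being explored.

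Once the last node of $t$ has been executed, the processor cannot follow its outgoing touch edge, because the local parent of $t$'s last touch lies further along the parent thread (a descendant of $v$'s right child) and has not yet run; it therefore pops the deque, whose next node is $v$'s right child, giving the second property. Resuming the parent thread from $v$'s right child, the processor eventually reaches each touch of $t$; since every future parent of such a touch is a node of $t$ that was already executed during the traversal, each touch is ready when its local parent runs, so its future parent precedes its local parent, giving the first property.

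The main obstacle is the bookkeeping around multiple touches of a single future thread. In the single-touch case $t$ has a unique touch whose future parent is $t$'s last node, so there is nothing to interleave; here $t$ may have several future parents strung along its chain and several matching touches scattered through the parent thread, and I must confirm that none of $t$'s own touches is encountered prematurely during the traversal while every internal touch is ready exactly when needed. This is precisely where the local-touch hypothesis --- that all of $t$'s touches are descendants of $v$'s right child --- is essential, and verifying that the deque discipline keeps $v$'s right child below no node of the parent thread throughout the (now larger) subtree traversal is the step that needs the most care.
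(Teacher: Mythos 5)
Your proof is correct and follows essentially the same route as the paper, which omits this proof as ``almost identical'' to that of Lemma~\ref{lemmafutureparentfirst}: the same deque-discipline argument showing that $v$'s right child stays at the bottom of the deque until the depth-first traversal of $t$'s subtree completes, with the local-touch constraint guaranteeing that no touch of $t$ itself is encountered during that traversal. Organizing the induction over future threads (innermost first) rather than over touches with no touch ancestors is a harmless repackaging of the same idea, well suited to future threads that have several touches.
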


The proof is omitted because it is almost identical to that of Lemma
\ref{lemmafutureparentfirst}. (We first consider a future thread whose touches are the
``earliest" in the DAG, that is, no other touches are ancestors of them,
and we can easily prove
the statement in Lemma \ref{lemmafutureparentfirstlocaltouch} holds for those touches.
Then by the same induction proof as for Lemma \ref{lemmafutureparentfirst},
we can prove the statement holds for all future threads' touches.)

\begin{theorem}\label{localtouchupperbound}
If the future thread at a fork is always chosen to execute first,
then a parallel execution with work stealing incurs $O(PT_{\infty}^2)$ deviations and
$O(CPT_{\infty}^2)$ additional cache misses in expectation on a structured
local-touch computation.
\end{theorem}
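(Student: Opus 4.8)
The plan is to follow the proof of Theorem~\ref{futurefirstupperbound} as closely as possible, since the local-touch setting differs from the single-touch setting only in that a future thread may now be touched several times, always from its own parent thread. As before I would start from the two external ingredients: Arora \emph{et al.}'s bound of $O(PT_{\infty})$ expected steals~\cite{Arora98}, and Acar \emph{et al.}'s bound that the additional cache misses are at most $C$ times the number of deviations~\cite{Acar00}. So it suffices to show that a work-stealing execution incurs $O(PT_{\infty}^2)$ deviations in expectation; the cache-miss bound then follows immediately. Lemma~\ref{lemmafutureparentfirstlocaltouch} plays the role that Lemma~\ref{lemmafutureparentfirst} played before: under future-first scheduling the left child of a fork is never a deviation, so every deviation is either a touch or the right child of a fork, and a touch is a deviation if and only if its local parent is executed before its future parent.

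The first real step is to re-prove the analog of Lemma~\ref{deviationtouch} for local-touch computations: if $t$ is the future thread at a fork $v$ with right child $u$, and either $u$ or some touch of $t$ is a deviation, then either $u$ is stolen or some touch \emph{by} $t$ is a deviation. The argument is the one used for Lemma~\ref{deviationtouch}, now invoking Lemma~\ref{lemmafutureparentfirstlocaltouch}: assuming $u$ is not stolen and no touch by $t$ is a deviation, the processor that executed $v$ stays on $t$, keeps $u$ at the bottom of its deque across every fork in $t$, and can discharge each touch by $t$ in place because its future parent is already done, so it finishes all of $t$ and then pops $u$, leaving neither $u$ nor any touch of $t$ a deviation. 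With this analog in hand I would trace each deviation back to a steal exactly as in Theorem~\ref{futurefirstupperbound}: a deviation at a touch of $t$ (or at $u$) forces a deviation at a touch by $t$, i.e.\ at a touch of some child future thread of $t$, and repeating this descends the thread tree until it reaches a stolen right child.

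The key difference, and the main obstacle, is that the single ``deviation chain'' of the single-touch proof becomes a branching object, because a future thread may be touched more than once. Reading the trace forward from a steal on the right child of the fork that spawns a thread $\theta_1$, the threads that carry the cascade of deviations are still exactly the ancestor threads $\theta_2 = \mathrm{parent}(\theta_1), \theta_3 = \mathrm{parent}(\theta_2), \ldots$, which form a single directed root-path of length at most $T_{\infty}$ in the thread tree; the branches introduced by multiple touches all reconverge, since every touch of $\theta_i$ lies in the one thread $\theta_{i+1}$. What no longer holds is the clean ``one deviation per level'' count: at level $i$ several touches of $\theta_i$ can simultaneously be deviations, and unlike the single-touch case I cannot argue that these touches all lie on one directed path of length $T_{\infty}$, because the single-touch argument used that a touch's future parent is the \emph{last} node of its future thread, which fails once a thread has several touches.

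The crux is therefore to show that, summed over all $O(PT_{\infty})$ steals, the extra touches contributed across the at most $T_{\infty}$ levels still total $O(PT_{\infty}^2)$ rather than the $O(PT_{\infty}^3)$ that the naive per-level count would give. My plan here is amortized: when a processor running $\theta_{i+1}$ reaches a not-yet-ready touch of $\theta_i$ it must abandon that thread, so each deviating touch of $\theta_i$ beyond the first corresponds to a later \emph{re-visit} of $\theta_{i+1}$, which must itself be triggered by a fresh scheduling event, either a steal or the completion of a blocking touch higher along the same path. I would charge each such extra touch-deviation to that triggering event and argue that the charge is injective enough that the total number of deviations stays within a constant factor of $T_{\infty}$ times the number of steals. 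Making this charging precise, and in particular ruling out that a single steal is charged for $\omega(T_{\infty})$ touch-deviations spread across the levels, is the step I expect to require the most care; everything else transfers from Theorem~\ref{futurefirstupperbound} essentially verbatim.
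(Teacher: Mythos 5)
Your proposal founders at exactly the point you flag yourself: the amortized charging argument that is supposed to bring the count down from the naive $O(PT_{\infty}^3)$ to $O(PT_{\infty}^2)$ is never actually carried out, and as sketched it is not clear it can be. You propose to charge each ``extra'' deviating touch of $\theta_i$ to a ``fresh scheduling event, either a steal or the completion of a blocking touch higher along the same path,'' but an event of the second kind is not a steal, so charging to it does not by itself relate the deviation count to the $O(PT_{\infty})$ bound on steals; you would still have to bound how many such secondary events a single steal can ultimately be responsible for, which is essentially the original problem restated. Since the whole theorem rests on this step, the proof as proposed is incomplete.

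The paper avoids the difficulty entirely by not using deviation chains at all. Its key observation is that the local-touch constraint makes the subgraph $G'$ consisting of a future thread $t$ and all of its descendant threads \emph{closed}: every touch of every descendant thread of $t$ lies in that thread's own parent thread, hence inside $G'$, so the only dependencies crossing the boundary of $G'$ are the fork node $v$ (already executed) and the touches of $t$ itself (blocked behind $v$'s right child $u$). Consequently, if $u$ is not stolen, the processor that executed $v$ runs all of $G'$ sequentially, with every node of $G'$ sitting below $u$ in its deque and therefore safe from thieves; there are no deviations inside $G'$, $u$ is popped immediately after the last node of $t$ and is not a deviation, and by Lemma~\ref{lemmafutureparentfirstlocaltouch} every touch of $t$ then has its future parent executed before its local parent, so it is not a deviation either. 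Thus each deviation is attributable, in a single step and with no cascading, to the steal of the right child of the fork of \emph{its own} future thread; each such steal accounts for at most the $T_{\infty}$ touches of that one thread plus the stolen right child itself, giving $O(PT_{\infty})\cdot O(T_{\infty})=O(PT_{\infty}^2)$ deviations and, by the result of Acar \emph{et al.}, $O(CPT_{\infty}^2)$ additional cache misses. In short, the branching ``deviation tree'' you were trying to amortize never needs to be built: under the local-touch constraint a deviation does not propagate from one thread to another unless a new steal occurs, and that new steal absorbs the charge.
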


\begin{proof}
Let $v$ be a fork that spawns a future thread $t$. Now we consider a parallel execution. Let $p$ be a processor that executes $v$ and pushes the right child of $v$ into its deque. Suppose the right child of $v$ is not stolen. Now consider the subgraph $G'$ consisting of $t$ and its descendant threads. Note that $G'$ itself is a structured computation DAG with local touch constraint. Now $p$ starts executing $G'$.

According to local touch constraint, the only nodes outside $G'$ that connect to the nodes in $G'$ are $v$ and the touches of $t$, and $c$ is the only node outside $G'$ that the nodes in $G'$ depend on. Now $v$ has been executed and the touches of $t$ are not ready to execute due to the right child of $v$. Hence, $p$ is able to make a sequential execution on $G'$ without waiting for any node outside to be done or jumping to a node outside, as long as no one steals a node in $G'$ from $p$'s deque. Since we assume the right child of $v$ will not be stolen and any nodes in $G'$ can only be pushed into $p$'s deque below $v$, no nodes in $G'$ can be stolen. Hence, $G'$ will be executed by a sequential execution by $p$. Therefore, there are no deviations in $G'$. After $p$ executed the last node in $G'$, which is the last node in $t$, $p$ pops the right child of $v$ to execute. Hence, the right child of $v$ cannot be a deviation either, if it is not stolen. That is, those nodes can be deviations only if the right child of $v$ is stolen. Since there are in expectation $O(PT_{\infty})$ steals in an parallel execution and each future thread has at most $T_{\infty}$ touches, the expected number of deviations is bounded by $O(PT_{\infty}^2)$ and the expected number of additional touches is bounded by $O(CPT_{\infty}^2)$.
\end{proof}

\subsection{Structured Computations with Super Final Nodes}\label{section_superfinalnode}
As discussed in Section~\ref{section_structured},
in languages such as Java,
the program's main thread typically releases all resources at the end of an execution.
To model this structure,
we add an edge from the last node of each thread to the final node of the computation DAG.
Thus, the final node becomes the only node with in-degree greater than 2.
Since the final node is always the last to execute,
simply adding those edges pointing to the final node into a DAG will not change
the execution order of the nodes in the DAG.
It is easy to see that having such a super node will not change the upper bound
on the cache overheads of the work-stealing parallel executions of a structured computation.

For structured computations with super final nodes,
it also makes sense to slightly relax the single-touch constraint as follows.

\begin{definition}
A structured single-touch computation with a \emph{super final node} is one where
each future thread $t$ at a fork $v$ has at least one and at most \emph{two} touches,
a descendant of $v$'s right child and the super final node.
\end{definition}

In such a computation,
a future thread can have the super final node as its only touch.
This structure corresponds to a program where one thread forks another thread to
accomplish a side-effect instead of computing a value.
The parent thread never touches the resulting future,
but the computation as a whole cannot terminate until the forked thread
completes its work.

Now we show that the parallel executions of structured single-touch computations
with super final nodes also have relatively low cache overheads.

\begin{lemma}\label{superfinalnode_lemmafutureparentfirst}
In the sequential execution of a structured single-touch computation with a super final node, where the future thread at a fork is always chosen to execute first,
any touch $x$'s future parent is executed before $x$'s local parent,
and the right child $u$ of any fork $v$ immediately follows the last node of
the future thread spawned at $v$, i.e.,
the future parent of the last touch of the future thread.
\end{lemma}

\begin{lemma}\label{superfinalnode_deviationtouch}
Let $t$ be the future thread at a fork $v$ in a structured single-touch computation with a super final node.
If a touch of $t$ or $v$'s right child $u$ is a deviation,
then either $u$ is stolen or there is a touch by $t$ which is a deviation.
\end{lemma}
\begin{proof}
The proofs of Lemma \ref{lemmafutureparentfirst} and Lemma~\ref{deviationtouch},
with only minor modifications, also apply to the above two lemmas, respectively.
That is because introducing the super final node into a computation doesn't affect
the order in which other nodes are executed, since no other nodes need to wait for
the super final node and the super final node is always the last node to execute.
More specifically, when a processor executing any thread $t$ reaches a node that
is a parent of the super final node, the processor will continue to work on $t$
if that node is not the last node of $t$, and otherwise try popping a node out of
its deque.
Therefore, by the same proof techniques as for
Lemmas \ref{lemmafutureparentfirst} and \ref{deviationtouch},
we can show that a processor will execute the right child $u$ of a fork $v$ and
the parents of the touches of the future spawned at $v$ in the order stated in
Lemmas \ref{superfinalnode_lemmafutureparentfirst} and \ref{superfinalnode_deviationtouch}.
\end{proof}

\begin{theorem}
If, at each fork, the future thread is chosen to execute first,
then a parallel execution with work stealing incurs $O(PT_{\infty}^2)$ deviations and
$O(CPT_{\infty}^2)$ additional cache misses in expectation on a structured single-touch
computation with a super final node.
\end{theorem}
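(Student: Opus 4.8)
The plan is to follow the proof of Theorem~\ref{futurefirstupperbound} almost verbatim, substituting the two lemmas stated immediately above (the super-final-node analogues of Lemma~\ref{lemmafutureparentfirst} and Lemma~\ref{deviationtouch}) for their originals. First I would invoke the result of Arora \emph{et al.} that a work-stealing execution performs $O(PT_{\infty})$ steals in expectation, and then bound how many deviations each steal can produce by tracing a forward deviation chain $x_1, x_2, \ldots$ of touches, where $x_{i+1}$ is a touch of the thread containing $x_i$. Since each $x_{i+1}$ is a descendant of $x_i$, such a chain lies along a single directed path and therefore has length at most $T_{\infty}$.

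The new wrinkle, and the main obstacle, is that under the super-final-node definition each future thread now has up to two touches --- a genuine touch that descends from the right child of its fork, together with the super final node --- so naively the forward trace could branch at every step and produce a tree of size exponential in $T_{\infty}$. The observation that rescues the argument is that one of the two possible successors at every step is always the single, globally shared super final node, which has out-degree $0$ and hence terminates that branch at once. Consequently the genuine touches still form a single path (each thread has only one genuine touch), so per steal the ``spine'' of genuine touches together with the right children of their corresponding forks contributes at most $2T_{\infty}$ deviations, whereas all branches into the super final node land on one and the same node, which therefore accounts for at most one deviation across the entire execution. Summing over the $O(PT_{\infty})$ steals then yields $O(PT_{\infty}^2)$ deviations.

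To argue that no further node is a deviation, I would reuse the contradiction (``trace back'') argument from Theorem~\ref{futurefirstupperbound}: assuming a genuine touch $y$ (or the right child of its corresponding fork) is a deviation lying in no chain, the second lemma above forces a deviation touch \emph{by} $y$'s future thread, and iterating this backward trace reaches a deviation touch with no touch-ancestors, contradicting that lemma. The point that makes this clean is that a ``touch by $t$'' is always a node inside $t$ and never the super final node (which is a touch \emph{of} $t$), so the backward trace only ever visits genuine touches and the super final node never interferes; it is simply tallied once on the side. The additional-cache-miss bound $O(CPT_{\infty}^2)$ then follows by multiplying the deviation count by $C$ via Acar \emph{et al.}, exactly as before. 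I expect the delicate point to be precisely the collapse of all the new branching onto the single super final node; once that is pinned down, the remainder is a mechanical transcription of the earlier proof.
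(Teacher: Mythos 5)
Your proposal is correct and takes essentially the same route as the paper's own (much terser) proof: both reduce to the argument of Theorem~\ref{futurefirstupperbound} and observe that the only new branching in a deviation chain points to the single super final node, which cannot propagate deviations any further, so each steal still yields one chain of length $O(T_{\infty})$ plus at most one extra node counted once globally. Your write-up merely makes explicit the details the paper leaves implicit (the out-degree-$0$ termination, the single shared node tallied on the side, and the fact that a touch \emph{by} a future thread is never the super final node, so the trace-back argument is unaffected).
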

\begin{proof}
The proof is similar to that of Theorem~\ref{futurefirstupperbound}. The only difference is that if a touch by a thread $t$ is a deviation, now the two touches of $t$ can both be deviations, which could be a trouble for constructing the deviation chains. Fortunately, one of these two touches is the super final node, which is always the last node to execute and hence will not make the touches of other threads become deviations. Therefore, we can still get a unique deviation chain starting from a steal and hence the proof of Theorem~\ref{futurefirstupperbound} still applies here.
\end{proof}

Similarly, we can also introduce a super final node to
a structured local-touch computation as follows.

\begin{definition}
A structured local-touch computation with a \emph{super final node} is one where
each future thread $t$ spawned at a fork $v$ can be touched only by the super final node
and by $t$'s parent thread at nodes that are descendants of the right child of $v$.
\end{definition}

It is obvious that by the same proof as for Theorem \ref{localtouchupperbound},
we can prove the following bounds.
\begin{theorem}\label{superfinalnode_localtouchupperbound}
If the future thread at a fork is always chosen to execute first,
then a parallel execution with work stealing incurs $O(PT_{\infty}^2)$ deviations and
$O(CPT_{\infty}^2)$ additional cache misses in expectation on a structured
local-touch computation with a super final node.
\end{theorem}
\section{Conclusions}
\label{section:Conclusions}
We have focused primarily on structured single-touch computations,
in which futures are used in a restricted way.
We saw that for such computations,
a parallel execution by a work-stealing scheduler that runs future threads
first can incur at most
$O(C P T^2_\infty)$ cache misses more than the corresponding sequential
execution,
a substantially better cache locality than the
$\Omega(C P T_\infty + C t T_\infty)$ worst-case
additional cache misses possible with unstructured use of futures.
Although we cannot prove this claim formally,
we think that these restrictions correspond to program structures that would
occur naturally anyway in many (but not all) parallel programs that use
futures.
For example,
Cilk~\cite{Blumofe95} programs are structured single-touch computations,
and that \cite{Blelloch97} observe
that the single-touch requirement substantially simplifies implementations.

We also considered some alternative restrictions on future use,
such as structured local-touch computations,
and structured computations with super final nodes,
that also incur a relatively low cache-locality penalty.
In terms of future work,
we think it would be promising to investigate how far these restrictions can be
weakened or modified while still avoiding a high cache-locality penalty.
We would also like to understand how these observations can be exploited by
future compilers and run-time systems.

\bibliographystyle{abbrvnat}

\end{document}